\documentclass[sigconf]{acmart}

\usepackage{booktabs} 



\acmConference[WWW'24]{The Web Conference 2024}{May 13 - 17, 2024}{Singapore}


\usepackage{amsmath}
\interdisplaylinepenalty=2500


\usepackage{amscd,amsfonts,amsbsy,rotating}
\usepackage{balance}
\usepackage{graphicx}
\usepackage{epsfig,epstopdf}
\usepackage{subfigure}
\usepackage{multirow}
\usepackage{booktabs}
\usepackage{color,xcolor}
\usepackage{url}
\usepackage{latexsym,bm}
\usepackage{enumitem,balance,mathtools}
\usepackage{wrapfig}
\usepackage{euscript}
\usepackage{algorithm}
\usepackage{algorithmic}
\usepackage{ifpdf}
\usepackage{diagbox}
\usepackage{caption}
\usepackage{makecell}
\usepackage{subfigure}
\usepackage[leftcaption]{sidecap}
\usepackage{bm}
\usepackage{textcomp}
\usepackage{upgreek}
\renewcommand{\thefootnote}{\fnsymbol{footnote}}

\usepackage{array}
\newcolumntype{N}{@{}m{0pt}@{}}

\usepackage{lipsum}

\usepackage{makecell}


\newcommand{\minisection}[1]{\vspace{5pt}\noindent\textbf{#1.}}

\begin{document}

	\title{\textsf{InfoRank}: Unbiased Learning-to-Rank via Conditional\\ Mutual Information Minimization}

\author{Jiarui Jin$^{*}$}
\affiliation{
\institution{Shanghai Jiao Tong University}
\country{Shanghai, China}
}
\email{jinjiarui97@sjtu.edu.cn}

\author{Zexue He$^{*}$}
\affiliation{
\institution{University of California San Diego}
\country{San Diego, United States}
}
\email{zehe@eng.ucsd.edu}

\author{Mengyue Yang}
\affiliation{
\institution{University College London}
\country{London, United Kingdom}
}
\email{mengyue.yang.20@ucl.ac.uk}

\author{Weinan Zhang}
\affiliation{
\institution{Shanghai Jiao Tong University}
\country{Shanghai, China}
}
\email{wnzhang@sjtu.edu.cn}

\author{Yong Yu}
\affiliation{
\institution{Shanghai Jiao Tong University}
\country{Shanghai, China}
}
\email{yyu@sjtu.edu.cn}

\author{Jun Wang}
\affiliation{
\institution{University College London}
\country{London, United Kingdom}
}
\email{jun.wang@ucl.ac.uk}

\author{Julian McAuley}
\affiliation{
\institution{University of California San Diego}
\country{San Diego, United States}
}
\email{jmcauley@eng.ucsd.edu}

	
\renewcommand{\shortauthors}{Jiarui Jin et al.}
	\renewcommand{\shorttitle}{InfoRank}
	
	\settopmatter{printacmref=false}
	
	\begin{abstract}
Ranking items regarding individual user interests is a core technique of multiple downstream tasks such as recommender systems.
Learning such a personalized ranker typically relies on the implicit feedback from users' past click-through behaviors.
However, collected feedback is biased toward previously highly-ranked items and directly learning from it would result in ``rich-get-richer'' phenomena. 
In this paper, we propose a simple yet sufficient unbiased learning-to-rank paradigm named \textsf{InfoRank} that aims to simultaneously address both position and popularity biases.
We begin by consolidating the impacts of those biases into a single \emph{observation} factor, thereby providing a unified approach to addressing bias-related issues.
Subsequently, we minimize the mutual information between the \emph{observation} estimation and the \emph{relevance} estimation conditioned on the input features.
By doing so, our relevance estimation can be proved to be free of bias. 
To implement \textsf{InfoRank}, we first incorporate an attention mechanism to capture latent correlations within user-item features, thereby generating estimations of observation and relevance.
We then introduce a regularization term, grounded in conditional mutual information, to promote conditional independence between relevance estimation and observation estimation.
Experimental evaluations conducted across three extensive recommendation and search datasets reveal that \textsf{InfoRank} learns more precise and unbiased ranking strategies.
	\end{abstract}

    \keywords{Unbiased Learning-to-Rank, Conditional Mutual Information, Position Bias, Popularity Bias}

    \begin{CCSXML}
<ccs2012>
   <concept>
       <concept_id>10002951.10003317.10003338.10003343</concept_id>
       <concept_desc>Information systems~Learning to rank</concept_desc>
       <concept_significance>500</concept_significance>
       </concept>
 </ccs2012>
\end{CCSXML}

\ccsdesc[500]{Information systems~Learning to rank}
	
	\settopmatter{printacmref=false}
	
	\maketitle

 {
\renewcommand{\thefootnote}{\fnsymbol{footnote}}
\footnotetext[1]{Equal contributions: Jiarui Jin and Zexue He.}
}

\section{Introduction} 
\label{sec:intro}
Implicit feedback garnered from user interactions (such as clicks) serves as a prevalent data source in situations where ``ground truth'' about explicit relevance is difficult to obtain on a Web platform.
Although implicit feedback offers the advantage of mitigating data labeling expenses, it introduces a spectrum of bias concerns \citep{joachims2017accurately,ai2018unbiasedb}.
As summarized in \citep{chen2020bias}, two typical forms of bias emerge during the data collection and serving phases: position bias and popularity bias.
These biases would be exacerbated by the feedback loops inherent in ranking systems, as depicted in Figure~\ref{fig:motivation}(a). 
Concretely, position bias manifests during the collection of user feedback and arises due to user browsing patterns.
As Figure~\ref{fig:motivation}(b) shows, users typically peruse presented item lists from top to bottom, with their attention diminishing rapidly along the way \citep{joachims2017accurately,jin2020deep}.
Consequently, higher-ranked items receive more exposure and greater opportunities for \emph{observation} and subsequent clicks \citep{craswell2008experimental,dupret2008user}, leading to an increased possibility of being collected as implicit feedback.
In contrast, popularity bias occurs when the system returns ranked lists for user service, as illustrated in Figure~\ref{fig:motivation}(b).
This bias prompts ranking systems to recommend popular items more frequently than their popularity would warrant.
In both cases, blindly optimizing ranking performance based on implicit feedback data may inadvertently reinforce the existing presentation or popularity order rather than learning personalized relevance, as higher-ranked or popular items naturally garner more user observations and increased opportunities to elicit positive feedback such as clicks.

Recently, numerous studies have raised awareness of such bias issues and have strived to uncover the underlying relevance from such biased feedback.
To address position bias, recent research has harnessed counterfactual learning technique \citep{oosterhuis2021unifying}, wherein the position is treated as a counterfactual factor \citep{rosenbaum1983central}, and employing inverse propensity weighting (IPW) \citep{agarwal2019general,jagerman2019model,wang2016learning} to rectify user feedback.
In the context of popularity bias, recent endeavors have explored diverse strategies, including diversity-based regularization \citep{abdollahpouri2017controlling,kamishima2014correcting} to balance rating and viewpoint values, and adversarial learning approaches \citep{krishnan2018adversarial} designed to uncover implicit associations between popular and less popular items.
However, it is noteworthy that these methods often focus on either position or popularity factors in isolation, and lack a unified debiasing framework.

In this paper, we reveal that these prevalent biases, which manifest during either the data collection or serving phases, exert their influence on ranking performance through what we term the ``observation'' factor.
For example, position bias originates from the user perspective, as the item's position correlates with whether it has been observed, thereby influencing user feedback; and popularity bias is induced by the ranking system, as the system-generated list is influenced by the frequency of user observations and subsequent clicks on a given item.
This insight motivates us to estimate user-item relevance free from the effect of the observation factor, since our premise is that user-item features should be the sole determinants of relevance.
In other words, irrespective of an item's position or the number of times it has been observed previously, its relevance to the user should remain consistent.

To this end, we propose \textsf{InfoRank}, a novel unbiased learning-to-rank paradigm that allows us to model the relevance from rich user-item profiles while mitigating the biases raised by the observation factor.
Our approach begins with the incorporation of an attention mechanism to capture concealed correlations embedded within the rich user-item features.
To illustrate, consider the scenario depicted in Figure~\ref{fig:motivation}(c):
Imagine a teenager searching for shoes, and the ranking system subsequently presents several shoe options based on her profile and historical data.
If, for instance, she has a limited budget (a user feature), she is more inclined to favor shoes with lower prices (an item feature).
These correlations hold significant importance in uncovering users' preferences concerning the presented items.
To ensure that our estimated relevance is solely influenced by user-item features, we introduce the \emph{observation} factor as a latent variable.
Then, we encourage the unbiased estimation (i.e.,~relevance) to be independent of the biased factor (i.e., observation) conditioned on the input features by minimizing their mutual information.
We further derive a regularization formulation of the conditional mutual information minimization.
Correspondingly, we devise a novel end-to-end framework that concurrently optimizes ranking performance while liberating the ranker from position and popularity biases stemming from implicit feedback.

Experiments conducted on three diverse datasets, demonstrate the superiority of \textsf{InfoRank} when compared to the state-of-the-art baselines across a variety of browsing patterns.
Moreover, our ablation studies illustrate that our conditional mutual information minimization has the potential to enhance the performance of other ranking models as well. 

	\begin{figure}[t]
		\centering
		\includegraphics[width=1\linewidth]{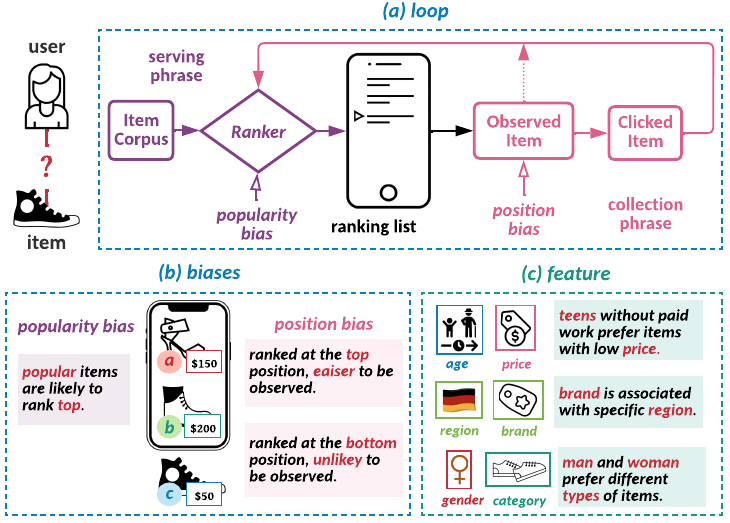}
		\vspace{-7mm}
		\caption {An illustrated example of the feedback loop, position bias, and popularity bias in learning-to-rank. 
            Within this process, the ranking system blends user and item features (c) with implicit feedback to generate the final ranking list.
            However, this system is susceptible to both position bias and popularity bias (b).
            Furthermore, these biases tend to be amplified within the feedback loop (a), potentially resulting in a ``rich-get-richer'' dilemma.
		}
		\label{fig:motivation}
		\vspace{-4mm}
	\end{figure}

\section{Bridging Biases in Ranking to Dependence in Causality}
\label{sec:pre}
\subsection{Biased and Unbiased Learning-to-Rank}
\label{sec:bias}
The objective of a point-wise learning-to-rank algorithm is to train a ranker $f$ that assigns a relevance score to each query-item pair (or referred to as a user-item pair).
However, as explicit relevance signals are usually too expensive in practice, biased ranking algorithms directly replace them with implicit click signals. 
Let $u$ denote the user and $d$ denote the item.
Let $\bm{x}$ denote the feature vector of $u$ and $d$.
Let $\mathcal{D}$ denote the whole set of items and $\mathcal{D}_u$ denote the set of items associated with $u$.
$c$ represents whether $d$ is clicked or not.
The risk function in learning is defined as
\begin{equation}
\label{eqn:ranker}
\mathcal{F}(f) = \sum_{u} \sum_{d \in \mathcal{D}_u} \Delta(f(\bm{x}),c),
\end{equation}
where $\Delta(f(\bm{x}),c)$ denotes a point-wise loss function.

Traditionally, ranker models are trained using user browsing logs including labeled click data to find the optimal ranker that minimizes the risk function.
However, this conventional approach is vulnerable to both position bias \citep{ai2018unbiasedb,joachims2017unbiased} and popularity bias \citep{abdollahpouri2017controlling,kamishima2014correcting}.
We refer to this as biased learning-to-rank.
Recent endeavors have introduced unbiased learning-to-rank algorithms aiming to eliminate such biases in click data and train an unbiased ranker.

\begin{table}[t]
    \caption{A summary of notations. 
    }  
    \label{tab:notation}
    \vspace{-4mm}
    \begin{center}  
    \begin{tabular}{p{2cm}<{\centering}p{6cm}<{\centering}}
    \toprule
    \textbf{Notations} & \textbf{Explanations} \\
    \midrule
    $O,C$ & Observation, click attributes (implicit feedback) \\
    $X=(U, I, P)$ & Features (user and item side including position) \\
    $R$ & Relevance attribute (unbiased estimation)\\
    $\mathcal{I}$ & Conditional mutual information, see Eq.~(\ref{eqn:mutual})\\
    $\mathcal{D}, \mathcal{D}_u$ & Dataset, dataset associated with user $u$\\
    \bottomrule
    \end{tabular}  
    \end{center} 
    \vspace{-3mm}
\end{table}

\minisection{Position Bias}
Many prior studies \citep{jin2020deep,ai2018unbiasedb,hu2019unbiased,wang2018position,wang2016learning} have underscored the observation that items occupying higher positions are more prone to being both observed and subsequently clicked.
Consequently, training a ranker directly on click data may lead to it primarily estimating the position order rather than the personalized relevance of items.
To rectify this inherent bias, conventional debiasing methods typically introduce an additional relevance factor, denoted as $r$, to indicate the relevance of item $d$.\footnote{For convenience, we only consider binary relevance here. One can easily extend our framework to a multi-level relevance case, with Eq.~(\ref{eqn:multi}) as a possible solution to convert it to the binary setting \citep{ai2018unbiasedb,hu2019unbiased}.} 
Then, these methods estimate $r$ instead of $c$ in ranking.
To this end, they leverage the insight that a user clicks on an item only when it has been both observed and perceived as relevant. 
This relationship can be formulated as:
\begin{equation}
\label{eqn:clickmodel}
   P(C=1|X=\bm{x})=P(R=1|X=\bm{x})\cdot P(O=1|X=\bm{x}).
\end{equation}
where $C,R,O,X$ denote the random variable.\footnote{We use uppercase letters (i.e.,~$C$,$R$,$O$,$X$) to denote random variables (see Table~\ref{tab:notation} for explanations), and lowercase letters (i.e.,~$c,r,o,\bm{x}$) to denote the corresponding value for each data point $d$.
We further expand the random variables to represent a set of data points and use calligraphic letters (i.e., $\mathcal{C},\mathcal{R},\mathcal{O}$) as notation.
In other words, to study the case of user $u$ and item $d$, $C=c$, $R=r$, $O=o$ where $c,r,o\in\{0,1\}$ show whether $d$ is clicked, relevant, observed by $u$, while $\mathcal{C}=\{c=1\}_d$, $\mathcal{R}=\{r=1\}_d$, $O=\{o=1\}_d$ show previous users' click/relevance/observation feedback on $d$.
} 
Then, the objective of unbiased learning-to-rank is to infer relevance from click data and generate a ranked list based on $P(R=1|X=\bm{x})$ different from biased learning-to-ranking using $P(C=1|X=\bm{x})$.

\minisection{Popularity Bias}
Several prior studies \citep{zhu2021popularity,ekstrand2018all,abdollahpouri2017controlling,kamishima2014correcting} have highlighted that items with higher levels of popularity are more likely to be posted and then are more frequently observed and clicked.
Consequently, optimizing a ranker's performance directly on click data may result in it primarily estimating the popularity order rather than personalized relevance.
As previous debiasing algorithms rely on past user feedback, particularly clicks, to estimate popularity for an item $d$, we employ the notations ($\mathcal{C}, \mathcal{O},\mathcal{R}$) to represent the set of prior feedback (clicks, observations, relevance) associated with $d$.
To illustrate, let's take $\mathcal{C}$ vs.~$C$ as an example ($\mathcal{O}$ and $\mathcal{R}$ share analogous interpretations): 
While $C$ measures how user $u$ is likely to engage with item $d$, $\mathcal{C}$ reflects how item $d$ has been previously consumed by other users.
Thus, unlike $C$, which captures the specific click value $c=1$ assigned by user $u$ to item $d$, the instance of $\mathcal{C}$ comprises the set of prior click feedback associated to $d$, denoted as $\{c=1\}_d$. 
When a ranker is tasked with estimating relevance between user $u$ and item $d$, it can be considered free from popularity bias if the relevance estimation remains independent of the item's past click history.
Alternatively, following the idea of collaborative filtering \citep{su2009survey}, relevance estimation could consider the item's historical relevance.
This condition can be formulated as:
\begin{equation}
P(R=1|\mathcal{C}=\{c=1\}_{d},X=\bm{x}) = P(R=1|\mathcal{R}=\{r=1\}_{d},X=\bm{x}).
\end{equation}
It is worth noting that $\{c=1\}_{d}$ and $\{r=1\}_{d}$ do not encompass the ``current'' click and relevance to be estimated.

\begin{figure}[t]
	\centering
	\includegraphics[width=\linewidth]{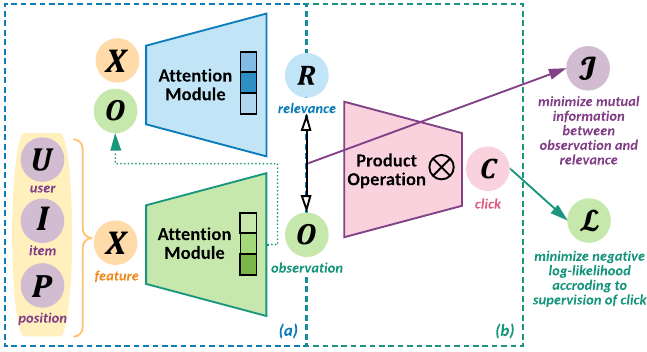}
	\vspace{-7mm}
	\caption {The overall architecture of \textsf{InfoRank}, where (a) 
	we first leverage an attention mechanism to mine correlations between user-item features (Section~\ref{subsec:causal}); and (b) we then introduce a regularization formulation (i.e., $\mathcal{I}$) aimed at establishing conditional mutual information to ensure that relevance becomes conditionally independent of the observation factor (Section~\ref{subsec:counterfactual}).
    To capture relevance within biased feedback, we incorporate this regularization term with supervision (i.e., $\mathcal{L}$) over user behaviors (Section~\ref{subsec:optimize}).
    We note that \textsf{InfoRank} remains working even in scenarios where there is no observation information available within user browsing logs.
    In such cases, we substitute real observations with estimated ones.
	}
	\label{fig:framework}
	\vspace{-4mm}
\end{figure}

\subsection{Causality in Ranking}
\label{subsec:fair}
Based on the above descriptions, position and popularity biases arise due to two key factors:
(i) User observation feedback is influenced by the position of items.
(ii) The system generates ranked lists based on observations or further clicks, which can introduce bias.
Consequently, the observation factor serves as a source of bias that propagates into subsequent factors.
In light of this, we consider the observation factor as the ``sensitive attribute''.
In this regard, an ideal ranker should adhere to the following principle: for any user $u$ and item $d$, given their associated feature vector $\bm{x}$, we have: 
\begin{equation}
\label{eqn:fair}
P(R=r|O=1,X=\bm{x})=P(R=r|O=0,X=\bm{x})
\end{equation}
holds for any relevance score $r\in\{0,1\}$, and any observation value $o\in\{0,1\}$ attainable by $O$.
We notice that Eq.~(\ref{eqn:fair}) can be interpreted as representing the conditional independence between the latent factors $O$ and $R$.
Here, external factors such as the user (or query) and item features including the position of the item are consolidated into $X$.
Based on Eq.~(\ref{eqn:fair}), one can derive the corresponding evaluation metric as follows:
\begin{equation}
\label{eqn:metric}
\Delta \text{CI} \coloneqq |P(R=1|O=1,X=\bm{x})-P(R=1|O=0,X=\bm{x})|.
\end{equation}
It is easy to show that iff Eq.~(\ref{eqn:fair}) holds, $\Delta \text{CI}=0$.

\minisection{Causality in Position Bias}
The fundamental approach for mitigating position bias is to estimate a relevance score $R$ that is entirely dependent on the user-item features $X$, free from any positional influence.
However, we contend that simply applying Eq.~(\ref{eqn:clickmodel}) may not suffice to achieve this goal.
This is because the estimation of an item's relevance can still be affected by whether it has been observed or not.
Therefore, we advocate for an additional step to ensure the conditional independence between $R$ and $O$.

To achieve this, we combine Eq.~(\ref{eqn:clickmodel}) and Eq.~(\ref{eqn:fair}) to derive:
\begin{equation}
\label{eqn:relevance}
P(R=1|X=\bm{x}) = P(R=1|O=o,X=\bm{x}),
\end{equation}
where $o\in\{0,1\}$.

\minisection{Casuality in Popularity Bias}
As per Eq.~(\ref{eqn:clickmodel}), we can derive
$P(\mathcal{C}=\{c=1\}_d|X=\bm{x})=P(\mathcal{R}=\{r=1\}_d|X=\bm{x})\cdot P(\mathcal{O}=\{o=1\}_d|X=\bm{x})$. 
It implies that given the features of an item $d$, its previous clicks (i.e., $\{c=1\}_d$) only occur when $d$ is both relevant (i.e., $\{r=1\}_d$) and observed (i.e., $\{o=1\}_d$) by users. 
Following this, we can proceed to derive:
\begin{equation}
\begin{aligned}
\label{eqn:pop1}
&P(R=1|\mathcal{C},X) =\frac{P(R=1|\mathcal{O},X)}{P(R=1|X)} P(R=1|\mathcal{R},X). 
\end{aligned}
\end{equation}
Here, for convenience, we use $\mathcal{C}$, $\mathcal{O}$, $\mathcal{R}$, $X$ to denote $\mathcal{C}=\{c=1\}_d$, $\mathcal{O}=\{o=1\}_d$, $\mathcal{R}=\{r=1\}_d$, $X=\bm{x}$. 
The detailed derivation procedure can be found in Appendix~\ref{app:deviation}.

Observing $\mathcal{O}=\{o=1\}_d$ and $O=1$ are closely correlated, given that they both signify user observations, we argue that reinforcing $O=1$ and $R=1$'s independence conditioned on $X=\bm{x}$ can lead to an approximation where $P(R=1|\mathcal{O}=\{o=1\}_d,X=\bm{x})/P(R=1|X=\bm{x})$ approaches 1.
The remaining part $P(R=1|\mathcal{R}=\{r=1\}_d,X=\bm{x})$ reflects the ranker's inductive capacity.
This capacity corresponds to the process of learning from the historical records $\mathcal{R}=\{r=1\}_d$ to infer $R=1$, specifically utilizing the past relevance feedback for item $d$ to infer current behavior regarding $d$.

\subsection{Connections to Related Work}
\label{sec:relate}
Implicit feedback such as clicks is abundant and easy to collect \citep{yi2014beyond,jin2020deep,qu2016product}.
However, implicit feedback is known to be plagued by various biases \citep{joachims2017accurately,hofmann2014effects}.
Previous research has predominantly focused on developing unbiased ranking methods to address position bias, which can be broadly categorized into two streams.
One is built on certain assumptions about user browsing behaviors \citep{chapelle2009dynamic,dupret2008user,wang2015incorporating,wang2013content,fang2019intervention,agarwal2019estimating}.
These methods aim to maximize the likelihood of observed user behavior in historical data collected from browsing logs.
For example, \citet{jin2020deep} applied survival analysis techniques to model user browsing behavior.
The other follows counterfactual learning approaches \citep{joachims2017unbiased,wang2018position,ai2018unbiasedb} which treats click bias as a counterfactual factor \citep{rosenbaum1983central} and mitigates user feedback biases through inverse propensity weighting \citep{hu2019unbiased}. 
For instance, \citet{wang2020unbiased} developed a new weighting scheme taking a holistic treatment of both clicks and non-clicks.
Popularity bias has also been a subject of recent research.
\citet{kamishima2014correcting} introduced an information-neutral recommender by ensuring independence between the ranker outcome and viewport features. 
\citet{krishnan2018adversarial} incorporated an adversarial network to play a min-max game, enabling the learning of implicit associations between popular and unpopular items.

However, each of these methods focuses on a specific bias based on a particular user browsing pattern.
In contrast, our paper aims to unify these biases through a single observation factor, providing a simple and indirect means to mitigate the impact of different biases simultaneously. 
Furthermore, as mentioned in Section~\ref{subsec:fair}, we contend that merely ``dividing'' the click factor $C$ into the observation factor $O$ and the relevance factor $R$ (as done in prior approaches) is sufficient to render relevance estimations bias-free.
In our approach, we go a step further and advocate for ensuring that the relevance and observation factors are conditionally independent of features.

More discussions about related casual-aware methods can be found in Appendix~\ref{app:relate}.

\section{The \textsf{InfoRank} Framework}
\label{sec:method}
\subsection{Overview}
\label{subsec:overview}
As highlighted in Section~\ref{subsec:fair}, we establish the connections between conditional independence and the issues of position and popularity biases.
The core idea behind \textsf{InfoRank} is to acquire unbiased ranking strategies by jointly optimizing the ranking performance and the achievement of conditional independence.
As illustrated in Figure~\ref{fig:framework}, we first use an attention mechanism to uncover latent correlations within user-item features and generate probability distributions $P(R=1|O,X)$ and $P(O=1|X)$ (shown in (a)).
Subsequently, we translate the concept of conditional mutual dependence into a regularization term (shown as $\mathcal{I}$) and then calculate the probability distribution $P(C=1|X)$ through a product operation (shown in (b)) according to Eqs.~(\ref{eqn:clickmodel}) and (\ref{eqn:relevance}), where clicks can be supervised by biased feedback (shown as $\mathcal{L}$).

We notice that in previous literature \citep{yan2022revisiting,craswell2008experimental}, relevance is estimated based on user-item features, while observation is estimated from bias-related features (e.g., position).
However, we argue that the observation factor is also related to user-item features.
For example, cautious users tend to thoroughly investigate various alternatives before arriving at a decision, which increases the likelihood of low-ranked items being observed. 
This tendency becomes even more pronounced when dealing with expensive items, as nearly everyone engages in thorough research.
As the primary objective of our paper is not to manually disentangle $\bm{x}$ into two distinct components for observation and relevance estimations, we combine user-item features and positions into a unified $\bm{x}$. 
We then use the proposed conditional independence regularization term to derive observation and relevance estimations.

\subsection{Unbiased Estimation}
\label{subsec:causal}
\subsubsection{Estimating $\bm{R}$ and $\bm{O}$}
A conventional method for amalgamating user-item features is to employ Multi-Layer Perceptron (MLP) layers, a technique commonly used in prior user modeling algorithms \citep{covington2016deep,shan2016deep,cheng2016wide}.
However, as highlighted in \citep{zhou2018deep,rendle2010factorization}, these models may lack the expressiveness required to effectively capture the intricate correlations present in user-item features.
Taking Figure~\ref{fig:motivation} as an instance, a teenager might prioritize price as a factor when choosing shoes due to her limited budget, indicating a correlation between age and price.
This insight encourages us to adopt an approach that takes into account information from both the user and item perspectives to weigh each feature, rather than focusing solely on one side.
Therefore, we incorporate a multi-head attention mechanism \citep{vaswani2017attention,velivckovic2017graph}.
Formally, let $\bm{x}=\{\bm{x}_0, \bm{x}_1, \ldots, \bm{x}_{N-1}\}$ denote $N$ categorical user-item features, where $\bm{x}_n$ is a one-hot \emph{vector} representing the $n$-th feature in $\bm{x}$.
In the case of non-categorical features, we directly pass them through a neural network to obtain their corresponding $\bm{x}_n$ values.
Then, for the $h$-th head we have:
\begin{equation}
\label{eqn:beta}
\beta_{ij}^{(h)} = (\bm{x}_{i}\bm{W}_T^{(h)}) \cdot (\bm{x}_{j}\bm{W}_S^{(h)})^\top,
\end{equation}
where $\bm{x}_{i}$ and $\bm{x}_{j}$ are the $i$-th and $j$-th feature,  
$\bm{W}_\cdot^{(h)}$s are trainable weights, 
and $\beta_{ij}^{(h)}$ determines the correlation between $\bm{x}_i$ and $\bm{x}_j$.
To get a general attention value for each user-item feature, we subsequently normalize this value within the feature scope as:
\begin{equation}
\label{eqn:alpha}
\alpha_{ij}^{(h)} = \text{softmax}(\beta_{ij}^{(h)}) = \frac{\text{exp}(\beta_{ij}^{(h)}/\iota)}{\sum^{N-1}_{j=0} \text{exp}(\beta_{ij}^{(h)}/\iota)},
\end{equation}
where $\iota$ denotes the temperature.
We jointly attend on the feature scope from different representation subspaces to learn stably as
\begin{equation}
\label{eqn:omega}
\bm{\omega}_i = \sigma\left(\bm{W}_q \cdot \Big(\frac{1}{H} \sum_{h=0}^{H-1} \sum^{N-1}_{j=0} \alpha_{ij}^{(h)}(\bm{x}_j \bm{W}_C^{(h)})\Big)+\bm{b}_q\right),
\end{equation}
where $H$ is the number of attention heads, 
and $\bm{W}_\cdot, \bm{b}_\cdot$ are trainable parameters.
We further integrate this rich information with the attention vector $\bm{w}$ to obtain feature embedding $\bm{p}$:
\begin{equation}
\label{eqn:observe}
\bm{p} = \frac{1}{N} \sum^{N-1}_{i=0} \bm{w}^\top \cdot \text{tanh} (\bm{W}_p \cdot \bm{\omega}_i + \bm{b}_p).
\end{equation}
$\bm{p}$ is further fed into two separated MLP modules activated by a sigmoid function without parameter sharing to obtain the estimation of $P(R=1|O=o,X=\bm{x})$ and $P(O=1|X=\bm{x})$ respectively.

\minisection{Remark}
We use Eqs.~(\ref{eqn:beta}), (\ref{eqn:alpha}), (\ref{eqn:omega}) and (\ref{eqn:observe}) to estimate both $P(R=1|O=o,X=\bm{x})$ and $P(O=1|X=\bm{x})$.
The only distinction lies in the input data:
when predicting the former, the input comprises the concatenation of the observation $o$ and user-item features $\bm{x}$, whereas when predicting the latter, the input includes solely the user-features $\bm{x}$.
Importantly, our approach does not require the availability of raw observation data in practice, as we can generate an estimated observation variable, denoted as $\widehat{o}$, to substitute for $o$, where $\widehat{o}=1$ when $P(O=1|X=\bm{x})>0.5$, and $\widehat{o}=0$ otherwise, as shown in Figure~\ref{fig:framework}.
In our experiments, all the observation information is not given.

\subsubsection{Estimating $\bm{C}$}
Given the distribution $P(O=1|X=\bm{x})$ and the conditional distribution $P(R=1|O=o,X=\bm{x})$, we can compute $P(C=1|X=\bm{x})$ using:
\begin{equation}
\label{eqn:click}
P(C=1|X=\bm{x}) = P(R=1|O=o,X=\bm{x}) \cdot P(O=1|X=\bm{x}),
\end{equation}
which can be derived by combining Eqs.~(\ref{eqn:clickmodel}) and (\ref{eqn:relevance}).

\subsection{Conditional Mutual Information}
\label{subsec:counterfactual}
As discussed in Section~\ref{subsec:fair}, it is essential to ensure that relevance estimations are conditionally independent of observation estimations.
This condition can be enforced through the regularization term of conditional mutual information, as per the following proposition:
\begin{proposition}
\label{the:fair}
Given that relevance, click, and observation variables are binary (i.e., $R, C, O\in\{1,0\}$), for any user-item pair with feature $X$,
the following statements are equivalent:
\begin{itemize}[topsep = 3pt,leftmargin =5pt]
\item The relevance $R$ and observation $O$ are conditionally independent 
given $X$.
In other words, $P(R,O|X)=P(R|X)\cdot P(O|X)$.
That is, $P(R|O=1,X)-P(R|O=0,X)=0$.
\item The conditional mutual information between relevance $R$ and observation $O$ (later defined in Eq.~(\ref{eqn:mutual})) is zero,
i.e.,~$\mathcal{I}(R;O|X)=0$.
\item The conditional independence score $\Delta \text{CI}$ is zero. 
\end{itemize}
\end{proposition}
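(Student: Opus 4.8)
The plan is to treat statement (1), the conditional independence condition $P(R,O|X)=P(R|X)\cdot P(O|X)$, as the central pivot and prove the two equivalences $(1)\Leftrightarrow(3)$ and $(1)\Leftrightarrow(2)$ separately; chaining them then yields the full three-way equivalence. The first equivalence is purely algebraic and leans on the binary assumption $R,O\in\{0,1\}$, while the second is the information-theoretic fact that conditional mutual information vanishes exactly under conditional independence. I note that statement (1) as written already bundles the factorization with its ``difference form'' $P(R|O=1,X)-P(R|O=0,X)=0$, so part of the work is really just reconciling that difference form with the absolute-value quantity $\Delta\text{CI}$ and with the full joint factorization.

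For $(1)\Leftrightarrow(3)$, the forward direction is immediate: if $P(R|O=o,X)=P(R|X)$ for each $o\in\{0,1\}$, then the two conditionals appearing in the definition of $\Delta\text{CI}$ coincide, so their difference — and hence $\Delta\text{CI}$ — is zero. For the converse I would start from $\Delta\text{CI}=0$, i.e.\ $P(R=1|O=1,X)=P(R=1|O=0,X)=:p$, and apply the law of total probability,
\begin{equation}
P(R=1|X)=P(R=1|O=1,X)\,P(O=1|X)+P(R=1|O=0,X)\,P(O=0|X),
\end{equation}
which collapses to $p\big(P(O=1|X)+P(O=0|X)\big)=p$. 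Hence $P(R=1|O=o,X)=P(R=1|X)$ for both values of $o$; because $R$ is binary the complementary identity for $R=0$ follows automatically, giving $P(R|O,X)=P(R|X)$ and therefore the factorization in (1).

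For $(1)\Leftrightarrow(2)$, I would write the conditional mutual information of Eq.~(\ref{eqn:mutual}) as the Kullback--Leibler divergence between the joint conditional law of $(R,O)$ and the product of its marginals,
\begin{equation}
\mathcal{I}(R;O|X)=\sum_{r,o}P(r,o|X)\,\log\frac{P(r,o|X)}{P(r|X)\,P(o|X)}\ge 0,
\end{equation}
and invoke Gibbs' inequality: the divergence is nonnegative and equals zero if and only if the two distributions agree pointwise, i.e.\ $P(r,o|X)=P(r|X)P(o|X)$ for every admissible $(r,o)$, which is precisely (1).

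The main obstacle I anticipate is not any single calculation but being careful about the equality case and about what ``given $X$'' means in the definition of $\mathcal{I}$. One must ensure the KL equality condition is applied only over outcomes with positive probability, and confirm that the quantity in Eq.~(\ref{eqn:mutual}) is the fixed-$\bm{x}$ mutual information rather than a feature-averaged one that could mask cancellations, so that the stated equivalence holds for each user-item pair. Beyond that, the binary structure keeps the total-probability step clean, since the single scalar equation $P(R=1|O=1,X)=P(R=1|O=0,X)$ already determines all four joint entries.
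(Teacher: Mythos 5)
Your proof is correct, and its architecture differs from the paper's in a way worth noting. The paper dispatches $(1)\Leftrightarrow(3)$ as trivial---legitimately so, since its statement~1 already bundles the difference form $P(R|O=1,X)-P(R|O=0,X)=0$ with the factorization, and $\Delta\text{CI}$ is just the absolute value of that difference---and concentrates all the work in $(2)\Rightarrow(1)$: it asserts, citing only Eq.~(\ref{eqn:information}), that $\mathcal{I}(R;O|X)=0$ forces $P(R|O=1,X)=P(R|O=0,X)$, then runs precisely your law-of-total-probability computation to obtain $P(R|X)=P(R|O,X)$ and hence the factorization, and closes $(1)\Rightarrow(2)$ with ``in a similar way.'' You instead relocate the total-probability step into $(3)\Rightarrow(1)$, which amounts to actually proving the equivalence between the difference form and the joint factorization that the paper's statement~1 treats as a mere rewording, and you settle $(1)\Leftrightarrow(2)$ in one stroke by viewing $\mathcal{I}(R;O|X=\bm{x})$ as a KL divergence and invoking Gibbs' inequality together with its equality case. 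Your route buys rigor exactly where the paper is thinnest: the implication ``zero CMI implies equal conditionals'' is the only genuinely non-algebraic step in the whole proposition, the paper states it without justification, and Gibbs' inequality delivers it and its converse simultaneously. Your closing caveats are also well placed: the equality case must be applied only to outcomes with $P(O=o|X=\bm{x})>0$, and since each fixed-$\bm{x}$ term is nonnegative, the feature-averaged quantity in Eq.~(\ref{eqn:mutual}) vanishes iff (almost) every per-$\bm{x}$ term does, so the averaging in the paper's definition cannot mask cancellations---a point the paper's proof never addresses.
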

Please refer to the proof in Appendix~\ref{app:proof1}.
\textsc{Proposition~\ref{the:fair}} allows \textsf{InfoRank} to minimize the conditional mutual information term $\mathcal{I}$ to indirectly enforce the conditional independence between the relevance and observation estimations.

$\mathcal{I}$ is defined as
\begin{equation}
\label{eqn:mutual}
\mathcal{I} \coloneqq
\mathcal{I}(R;O|X) = 
\mathbb{E}_{\bm{x}\sim \mathcal{D}} \left[ \mathcal{I}(R;O|X=\bm{x}) \right],
\end{equation}
where
\begin{equation}
\label{eqn:information}
\begin{aligned}
&\mathcal{I}(R;O|X=\bm{x}) \\
=& \sum_{R,O}P(R,O|X=\bm{x})\cdot \ln \frac{P(R,O|X=\bm{x})}{P(R|X=\bm{x})\cdot P(O|X=\bm{x})}\\
=& \sum_{R,O} P(R|O,X=\bm{x}) \cdot P(O|X=\bm{x}) \cdot \ln \frac{P(R|O,X=\bm{x})}{P(R|X=\bm{x})}.
\end{aligned}
\end{equation}

Since both $P(R|O,X=\bm{x})$ and $P(O|X=\bm{x})$ can be obtained from estimations described in Section~\ref{subsec:causal}, we can then derive $P(R|X=\bm{x})$ using: 
\begin{equation}
\label{eqn:mutualplus}
P(R|X=\bm{x}) = \sum_O P(R|O,X=\bm{x})\cdot P(O|X=\bm{x}).
\end{equation}
Hence, for any given user-item features $\bm{x}$, we can calculate the conditional mutual information according to Eqs.~(\ref{eqn:mutual}) (\ref{eqn:information}) and (\ref{eqn:mutualplus}).

\subsection{Optimization Functions}
\label{subsec:optimize}
Figure~\ref{fig:framework} illustrates our joint optimization of both $\mathcal{L}$ and $\mathcal{I}$.
Regarding $\mathcal{L}$, given that click signals are binary, we employ Binary Cross Entropy (BCE) loss for click supervision.
The BCE loss can be formulated as:
\begin{equation}
\label{eqn:lossbce}
\begin{aligned}
    \mathcal{L} &= -\sum_{(c,\bm{x})\in\mathcal{D}}\Big(c\cdot\log{P(\hat{c}|\bm{x})} + (1-c)\cdot\log (1-P(\hat{c}|\bm{x})) \Big),
\end{aligned}
\end{equation} 
where $\hat{c}$ denotes the prediction of each instance and $P(\hat{c}|\bm{x})$ is computed using Eq.~(\ref{eqn:click}), and $c$ is the corresponding label of the instance.
In scenarios where observation information is accessible, a similar loss can be readily constructed for observation supervision.
As for $\mathcal{I}$, we have derived its regularization formulation in Eq.~(\ref{eqn:mutual}).

Combining both objective functions, our aim is to minimize the log-likelihood of estimation while incorporating conditional mutual information regularization across all the data samples.
The objective function is expressed as:
\begin{equation}
\label{eqn:loss}
\text{arg} \mathop{\text{min}}_\theta \mathcal{L} + \eta \cdot \mathcal{I},
\end{equation}
where the hyper-parameter $\eta$ balances the prediction loss and the regularization and $\theta$ denotes all trainable parameters in \textsf{InfoRank}.

\begin{algorithm}[t]
	\caption{\textsf{InfoRank}}
	\label{algo:framework}
	\begin{algorithmic}[1]
		\REQUIRE
		implicit feedback dataset $\mathcal{D} = \{(\bm{x},c,o)\}$;
		\ENSURE
		unbiased ranker $f_\theta$ with parameter $\theta$
		\vspace{1mm}
		\STATE Initialize all parameters.
		\REPEAT
		\STATE Randomly sample a batch $\mathcal{B}$ from $\mathcal{D}$
		\FOR {each data point $(\bm{x}, c, o)$ in $\mathcal{B}$}
		\STATE Calculate $P(R=1|O=o,X=\bm{x})$ (and $P(O=1|X=\bm{x})$) using Eqs.~(\ref{eqn:beta}), (\ref{eqn:alpha}), and (\ref{eqn:observe}).
		\label{line:unbias}
            \STATE Compute $P(C=1|X=\bm{x})$ using Eq.~(\ref{eqn:click}).
		\ENDFOR
		\STATE Compute $\mathcal{L}$ and $\mathcal{I}$ according to Eqs.~(\ref{eqn:lossbce}) and (\ref{eqn:mutual}).
		\label{line:loss}
		\STATE Update $\theta$ by minimizing Eq.~(\ref{eqn:loss}).
		\UNTIL convergence
	\end{algorithmic}
\end{algorithm}

\subsection{Model Analysis}
\label{sec:analysis}
\minisection{Unbiased Estimation}
As discussed in Section~\ref{sec:bias}, unbiased learning-to-rank aims to estimate unbiased relevance $r$ instead of biased click $c$, whose ideal risk function (denoted as $\widetilde{\mathcal{F}}(f)$) can be directly derived from $\mathcal{F}(f)$ (defined in Eq.~(\ref{eqn:ranker})) by replacing $c$ with $r$. 

We demonstrate that a ranker is considered unbiased if it fulfills the condition in Eq.~(\ref{eqn:fair}), as stated in the following proposition.
\begin{proposition}
\label{pro:unbias}
Assuming that a ranker satisfies Eq.~(\ref{eqn:fair}), namely $P(R|O=1,X=\bm{x})-P(R|O=0,X=\bm{x})=0$ holds and $P(O|X=\bm{x})$ is bounded away from zero for any data point in $\mathcal{D}$, then the ranker is unbiased, namely optimizing
\begin{equation}
\label{eqn:prorisk}
\widehat{\mathcal{F}}(f) = \sum_u \sum_{d \in \mathcal{D}_u} \frac{\Delta(f(\bm{x}),c)}{P(O|X=\bm{x})}
\end{equation}
is equivalent to optimizing $\widetilde{\mathcal{F}}(f)$.
\end{proposition}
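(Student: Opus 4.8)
The plan is to establish the standard inverse-propensity-weighting notion of unbiasedness: I will show that, taking expectation over the randomness of the observation (examination) process, the reweighted empirical risk $\widehat{\mathcal{F}}(f)$ agrees with the ideal relevance risk $\widetilde{\mathcal{F}}(f)$ for every candidate ranker $f$. Since the identity will hold pointwise in $f$, the two objectives share the same minimizers, which is exactly the claimed equivalence. The starting point is the click model underlying Eq.~(\ref{eqn:clickmodel}): a click occurs precisely when an item is both observed and relevant, so for a single $(u,d)$ pair with feature $\bm{x}$ we may write $c = o\cdot r$ with $o\sim\mathrm{Bernoulli}(P(O=1|X=\bm{x}))$. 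I will also use the fact that for binary labels the point-wise loss factors as $\Delta(f(\bm{x}),c)=c\cdot\Delta(f(\bm{x}),1)$, which holds for the rank-based losses standard in this setting because $\Delta(f(\bm{x}),0)=0$.

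The core computation is a conditional-expectation argument applied term by term. Fixing the (unobserved) ground-truth relevance $R=r$ and averaging over the observation variable, I would compute
\begin{equation}
\mathbb{E}_O\!\left[\frac{\Delta(f(\bm{x}),c)}{P(O=1|X=\bm{x})}\,\Big|\,X=\bm{x},R=r\right]
=\frac{\Delta(f(\bm{x}),1)\cdot\mathbb{E}_O[c\,|\,X=\bm{x},R=r]}{P(O=1|X=\bm{x})}.
\end{equation}
Here I would expand $\mathbb{E}_O[c\,|\,X=\bm{x},R=r]=r\cdot P(O=1|X=\bm{x},R=r)$, and this is the step where Eq.~(\ref{eqn:fair}) is indispensable: conditional independence of $O$ and $R$ given $X$ lets me replace $P(O=1|X=\bm{x},R=r)$ by the relevance-free propensity $P(O=1|X=\bm{x})$, which is exactly the quantity used in the reweighting. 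The propensity then cancels, leaving $r\cdot\Delta(f(\bm{x}),1)=\Delta(f(\bm{x}),r)$. Summing this identity over all $u$ and $d\in\mathcal{D}_u$ yields $\mathbb{E}_O[\widehat{\mathcal{F}}(f)]=\widetilde{\mathcal{F}}(f)$, establishing that $\widehat{\mathcal{F}}$ is an unbiased surrogate for the ideal risk.

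I expect the main obstacle to be pinning down precisely where the conditional-independence hypothesis enters, rather than any heavy calculation. Without Eq.~(\ref{eqn:fair}) the examination probability would in general depend on relevance, i.e.\ $P(O=1|X=\bm{x},R=r)\neq P(O=1|X=\bm{x})$, so the weight $1/P(O=1|X=\bm{x})$ would fail to cancel the conditional click probability and would reintroduce bias; the assumption is thus what guarantees the propensity is the correct, relevance-agnostic one. A secondary point to handle carefully is the positivity condition: requiring $P(O|X=\bm{x})$ to be bounded away from zero ensures the weights are finite and the reweighted risk is well defined, so the term-by-term expectation is legitimate. Finally, I would flag the loss-structure assumption $\Delta(f(\bm{x}),0)=0$ explicitly, since the clean linear-in-label factorization---and hence the exact cancellation---relies on it.
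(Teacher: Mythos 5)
Your proposal is correct and follows essentially the same route as the paper's proof: both are the standard IPW expectation argument that writes the click as observation times relevance via Eq.~(\ref{eqn:clickmodel}), takes the expectation over the observation variable so that $\mathbb{E}(o)=P(O|X=\bm{x})$ cancels the propensity weight, and invokes Eq.~(\ref{eqn:fair}) together with the positivity of $P(O|X=\bm{x})$ to conclude $\mathbb{E}\bigl(\widehat{\mathcal{F}}(f)\bigr)=\widetilde{\mathcal{F}}(f)$. The only (cosmetic) difference is where the conditional independence enters: you use it mid-computation to replace $P(O=1|X=\bm{x},R=r)$ by $P(O=1|X=\bm{x})$, whereas the paper applies it at the end, via Eq.~(\ref{eqn:relevance}), to identify the relevance $r'$ drawn from $P(R|O=o,X=\bm{x})$ with the relevance $r$ drawn from $P(R|X=\bm{x})$ --- two symmetric manifestations of the same hypothesis.
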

Please refer to the proof in Appendix~\ref{app:proof2}.
It is worth noticing that Eq.~(\ref{eqn:prorisk}) bears a resemblance to the risk function of inverse propensity weighting (IPW), as described in Eq.~(2) in \citep{ai2018unbiasedb}.
However, our approach is more comprehensive because it accounts for the influence of the observation factor to simultaneously address position and popularity biases.
In a mathematical form, if Eq.~(\ref{eqn:fair}) always holds, then our method based on Eq.~(\ref{eqn:click}) will simplify to an IPW-based method based on Eq.~(\ref{eqn:clickmodel}).

    \begin{table*}[t]
	\centering
	\caption{Comparison of different unbiased learning-to-rank methods on Yahoo Search Engine, LETOR Webpage Ranking, and Adressa Recommender System datasets. 
		UBM is used as a click generation model.
		* indicates $p < 0.001$ in significance tests compared to the best baseline.}
	\vspace{-2mm}
	\resizebox{1.00\textwidth}{!}{
		\begin{tabular}{@{\extracolsep{4pt}}|c|c|cccc|cccc|cccc|}
			\toprule
			\multirow{2}{*}{Ranker} & \multirow{2}{*}{Debiasing Method} & \multicolumn{4}{c}{Yahoo (UBM)} |& \multicolumn{4}{c}{LETOR (UBM)} |& \multicolumn{4}{c}{Adressa (UBM)} |\\
			\cmidrule{3-6}
			\cmidrule{7-10}
			\cmidrule{11-14}
			{} & {} & MAP & N@3 & N@5 & N@10 & MAP & N@3 & N@5 & N@10 & MAP & N@3 & N@5 & N@10 \\
			\midrule
			\multirow{5}{*}{\makecell[c]{\textbf{\textsf{InfoRank}} \\ \textbf{(Ranking)}}} & Labeled Data & 
			.856 & .755 & .760 & .795 & 
			.695 & .381 & .468 & .563 & 
			.821 & .714 & .727 & .754 \\ 
			{} & \textbf{\textsf{InfoRank} (Debiasing)} & 
			\textbf{.845}$^{*}$ & \textbf{.736}$^{*}$ & \textbf{.739}$^{*}$ & \textbf{.779}$^{*}$ & 
			\textbf{.650}$^{*}$ & \textbf{.380}$^{*}$ & \textbf{.460}$^{*}$ & \textbf{.541}$^{*}$ & 
			\textbf{.801}$^{*}$ & \textbf{.691}$^{*}$ & \textbf{.715}$^{*}$ & \textbf{.739}$^{*}$\\
			{} & Regression-EM & 
			.837 & .683 & .692 & .731 & 
			.634 & .374 & .442 & .535 & 
			.794 & .673 & .706 & .731\\
			{} & Randomization & 
			.835 & .680 & .689 & .728 &
			.630 & .368 & .437 & .515 &
			.792 & .668 & .695 & .728\\
			{} & Click Data & 
			.823 & .670 & .678 & .720 & 
			.622 & .356 & .428 & .489 & 
			.782 & .648 & .677 & .707\\
			\midrule
			\multirow{5}{*}{LambdaMART} & Labeled Data & 
			.854 & .745 & .757 & .790 & 
			.685 & .380 & .461 & .558 &
			.814 & .709 & .722 & .747\\
			{} & Ratio-Debiasing & 
			.832 & .712 & .722 & .755 & 
			.631 & .365 & .421 & .506 & 
			.791 & .669 & .702 & .730\\
			{} & Regression-EM & 
			.827 & .680 & .693 & .741 & 
			.628 & .356 & .411 & .490 & 
			.785 & .650 & .681 & .711\\
			{} & Randomization & 
			.824 & .675 & .687 & .725 &
			.624 & .346 & .407 & .482 &
			.784 & .648 & .678 & .705\\
			{} & Click Data & 
			.814 & .666 & .673 & .712 & 
			.614 & .339 & .396 & .473 & 
			.779 & .635 & .664 & .694\\
			\midrule
			\multirow{6}{*}{DNN} & Labeled Data & 
			.831 & .685 & .705 & .737 & 
			.678 & .364 & .454 & .551 & 
			.802 & .700 & .722 & .745\\
			{} & \textbf{\textsf{InfoRank} (Debiasing)} & 
			.828 & .683 & .696 & .734 & 
			.637 & .360 & .416 & .499 & 
			.786 & .667 & .692 & .725\\
			{} & Dual Learning & 
			.825 & .680 & .693 & .730 & 
			.625 & .352 & .410 & .487 & 
			.784 & .663 & .688 & .720\\
			{} & Regression-EM & 
			.823 & .676 & .689 & .726 & 
			.618 & .347 & .400 & .479 & 
			.779 & .656 & .675 & .713\\
			{} & Randomization & 
			.822 & .677 & .686 & .724 &
			.617 & .346 & .397 & .477 &
			.777 & .644 & .664 & .701\\
			{} & Click Data & 
			.817 & .665 & .671 & .710 & 
			.612 & .335 & .387 & .469 & 
			.775 & .633 & .659 & .688\\
			\bottomrule
		\end{tabular}
	}
	\label{tab:res}
	\vspace{-2mm}
\end{table*}

\minisection{Learning Algorithm}
We provide the learning algorithm of \textsf{InfoRank} in Algorithm~\ref{algo:framework}.
During training, we compute the overall loss (i.e., Eq.~(\ref{eqn:loss})) and supervise our model with click (and observation) signals; while during inference, relevance estimations $P(R|X=\bm{x})$ (which corresponds to $P(R|O=o,X=\bm{x})$ in the paper) are computed and the ranking list is generated in descending order of relevance.
This process is consistent with the conventional unbiased learning-to-rank pipeline \citep{ai2018unbiaseda,ai2018unbiasedb,hu2019unbiased}.
Additionally, although the input data of Algorithm~\ref{algo:framework} is a set of tuples $\{(\bm{x},c,o)\}$, \textsf{InfoRank} can be adapted to data without any observation signal (i.e., $\{(\bm{x},c)\}$) by replacing the label $o$ from data with the estimation $P(O|X=\bm{x})$.

\minisection{Complexity Analysis}
The main component of \textsf{InfoRank} is the attention module, and its complexity analysis can be summarized as follows.
Let $N$ denote the total number of user-item features, and $H$ denote the number of attention heads.
Then the time complexity of the attention model is $O(NF_1F_2H+EF_1H)$, where $E$ is the number of input feature pairs, and $F_1$ and $F_2$ are the numbers of rows and columns of the attention matrix respectively.
The overall complexity of \textsf{InfoRank} is linear with respect to the number of features and the number of feature pairs.
The prediction network's time complexity is $O(C_\text{att})$, where $C_\text{att}$ is the cost of one multi-head attention operation.

\section{Experiments}
\subsection{Datasets and Experimental Setting}
\label{subsec:data}
\minisection{Data Description}
We evaluate our methods against the strong baselines over three widely adopted learning-to-rank datasets: \textbf{Yahoo search engine dataset}\footnote{\url{http://webscope.sandbox.yahoo.com}}, \textbf{LETOR webpage ranking dataset}\footnote{\url{https://www.microsoft.com/en-us/research/project/letor-learning-rank-information-retrieval/}}, and \textbf{Adressa recommender system dataset}\footnote{\url{http://reclab.idi.ntnu.no/dataset/}}.

\minisection{Click Data Generation}
In order to simulate the different user browsing patterns and generate click data, the following process in \citep{ai2018unbiasedb,hu2019unbiased,jin2020deep,tran2021ultra} is followed.
First, we train a Rank SVM model with $1\%$ of training data that includes relevance labels. 
Second, we create an initial ranked list for each query using the trained Rank SVM model.
Third, we simulate user browsing processes and sampling clicks from the initial list.
To ensure that \textsf{InfoRank} captures diverse user browsing patterns, three simulation models are used, and each of them corresponding to a specific browsing pattern: \textbf{PBM} \citep{richardson2007predicting}, \textbf{UBM} \citep{dupret2008user}, and \textbf{CCM} \citep{guo2009click}.

We provide detailed descriptions of each dataset and each simulation model in Appendix~\ref{app:dataset} and \ref{app:click}, and hyper-parameter setting in Appendix~\ref{app:hyperparameter}.

\begin{table*}[t]
	\centering
	\caption{Comparison of different unbiased learning-to-rank methods under PBM, CCM. 
	Yahoo Search Engine data and LETOR Webpage Ranking are used as click datasets.
	Results of Adressa Recommendation System follow similar trends, and are omitted due 
	for brevity.
	* indicates $p < 0.001$ in significance tests compared to the best baseline.}
	\vspace{-2mm}
	\resizebox{1.00\textwidth}{!}{
		\begin{tabular}{@{\extracolsep{4pt}}|c|c|ccc|ccc|ccc|ccc|}
			\toprule
			\multirow{2}{*}{Ranker} & \multirow{2}{*}{Debiasing Method} & \multicolumn{3}{c}{Yahoo (PBM)} |& \multicolumn{3}{c}{Yahoo (CCM)} |& \multicolumn{3}{c}{LETOR (PBM)} |& \multicolumn{3}{c}{LETOR (CCM)}
			|\\
			\cmidrule{3-5}
			\cmidrule{6-8}
			\cmidrule{9-11}
			\cmidrule{12-14}
			{} & {} & MAP & N@5 & N@10 & MAP & N@5 & N@10 & MAP & N@5 & N@10 & MAP & N@5 & N@10 \\
			\midrule
			\multirow{5}{*}{\makecell[c]{\textbf{\textsf{InfoRank}} \\ \textbf{(Ranking)}}} & Labeled Data & 
			.856 & .760 & .795 & 
			.856 & .760 & .795 &
			.695 & .468 & .563 & 
			.695 & .468 & .563\\
			{} & \textbf{\textsf{InfoRank} (Debiasing)} & 
			\textbf{.849}$^{*}$ & \textbf{.732}$^{*}$ & \textbf{.772}$^{*}$ & 
			\textbf{.846}$^{*}$ & \textbf{.712}$^{*}$ & \textbf{.758}$^{*}$ & 
			\textbf{.681}$^{*}$ & \textbf{.457}$^{*}$ & \textbf{.559}$^{*}$ & 
			\textbf{.658}$^{*}$ & \textbf{.455}$^{*}$ & \textbf{.539}$^{*}$\\
			{} & Regression-EM & 
			.841 & .715 & .740 & 
			.822 & .685 & .734 &
			.675 & .453 & .552 &
			.652 & .450 & .534\\
			{} & Randomization & 
			.840 & .704 & .736 &
			.817 & .679 & .728 &
			.671 & .450 & .551 &
			.649 & .449 & .531\\
			{} & Click Data & 
			.831 & .682 & .725 & 
			.808 & .658 & .710 &
			.647 & .445 & .510 &
			.640 & .439 & .498\\
			\midrule
			\multirow{5}{*}{LambdaMART} & Labeled Data & 
			.854 & .757 & .790 & 
			.854 & .757 & .790 &
			.685 & .461 & .558 & 
			.685 & .461 & .558\\
			{} & Ratio-Debiasing & 
			.836 & .728 & .764 & 
			.828 & .691 & .738 &
			.648 & .446 & .513 &
			.644 & .440 & .502\\
			{} & Regression-EM & 
			.830 & .700 & .743 & 
			.816 & .675 & .727 &
			.636 & .436 & .509 & 
			.634 & .431 & .497\\
			{} & Randomization &
			.827 & .690 & .728 &
			.814 & .673 & .722 &
			.633 & .433 & .498 &
			.628 & .427 & .493\\
			{} & Click Data & 
			.820 & .672 & .716 & 
			.804 & .653 & .706 &
			.630 & .424 & .494 &
			.625 & .418 & .488 \\
			\midrule
			\multirow{6}{*}{DNN} & Labeled Data & 
			.831 & .705 & .737 & 
			.831 & .705 & .737 &
			.678 & .454 & .551 & 
			.678 & .454 & .551\\
			 & \textbf{\textsf{InfoRank} (Debiasing)} & 
			.829 & .703 & .736 & 
			.828 & .692 & .735 &
			.651 & .446 & .547 & 
			.650 & .444 & .531\\
			{} & Dual Learning & 
			.828 & .697 & .734 & 
			.823 & .681 & .731 &
			.645 & .437 & .528 & 
			.638 & .430 & .525\\
			{} & Regression-EM & 
			.829 & .699 & .736 & 
			.819 & .678 & .728 &
			.635 & .426 & .500 & 
			.628 & .417 & .490\\
			{} & Randomization &
			.825 & .693 & .732 &
			.816 & .674 & .726 &
			.630 & .419 & .495 &
			.625 & .415 & .487\\
			{} & Click Data & 
			.819 & .667 & .711 & 
			.801 & .650 & .705 &
			.629 & .419 & .492 & 
			.621 & .409 & .480\\
			\bottomrule
		\end{tabular}
	}
	\label{tab:gen}
	\vspace{-2mm}
\end{table*}

\begin{figure}[t]
	\centering
	\includegraphics[width=1.0\linewidth]{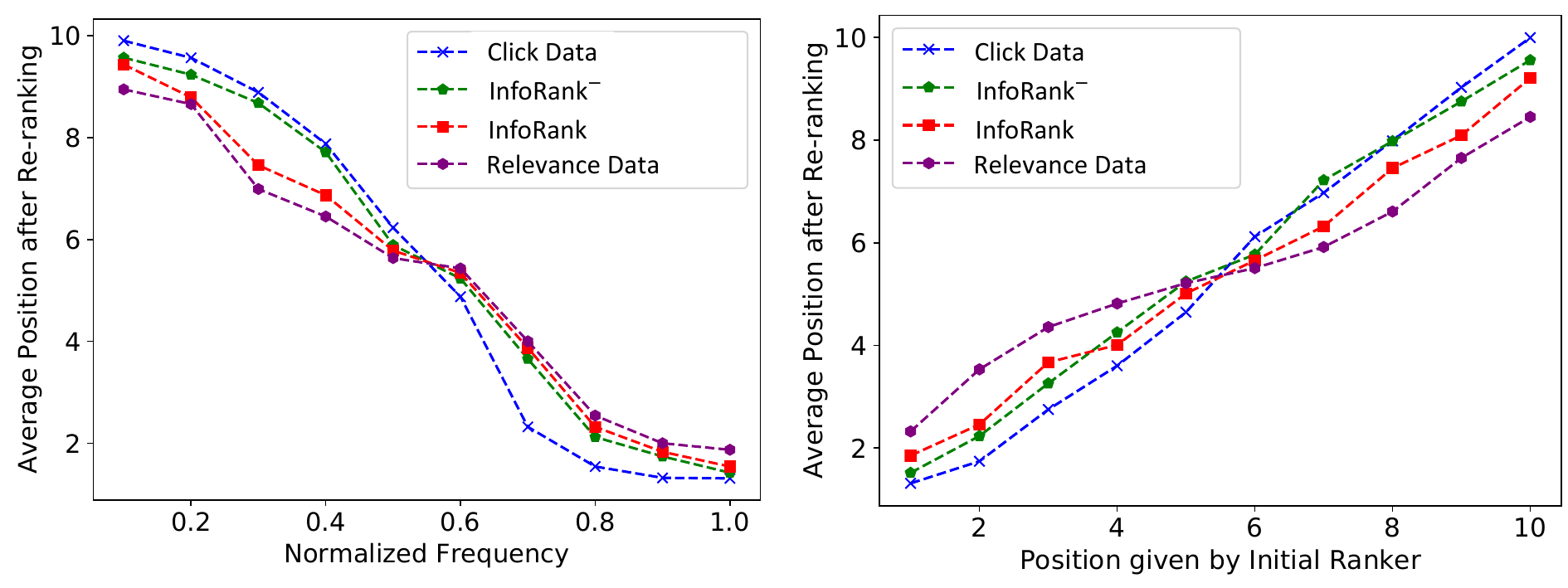}
	\vspace{-8mm}
	\caption{Average positions after re-ranking of items at each normalized frequency (in the left subfigure); or at each original position (in the right subfigure) by different debiasing methods together with \textsf{InfoRank} and \textsf{InfoRank}$^-$ on Yahoo.}
	\label{fig:position}
	\vspace{-4mm}
\end{figure}

\subsection{Baselines and Metrics}
\minisection{Baseline Description}
Baselines are created by combining learning-to-rank algorithms with state-of-the-art debiasing methods.
We perform comparisons against strong baselines with the debiasing methods, introduced as follows. 
\textbf{Randomization} \citep{joachims2017unbiased} is a randomization technique to infer the observation probabilities.
While practical, we randomly shuffled
the rank lists and then estimated the position biases as in \citep{hu2019unbiased,ai2018unbiasedb}.
\textbf{Regression-EM} \citep{wang2018position} is a regression-based expectation maximization (EM) method, where position bias is directly estimated from regular production clicks.
\textbf{Dual Learning} \citep{ai2018unbiasedb} is a dual learning algorithm, which can jointly learn a debiasing model and a ranker on click data.
\textbf{Ratio-Debiasing} \citep{hu2019unbiased} is a pair-wise unbiased learning-to-rank model based on inverse propensity weight (IPW) \citep{wang2016learning}. 
We use \textbf{Click Data} to represent methods directly built on raw click data without debiasing to train the ranker, whose performance is regarded as a lower bound. \textbf{Labeled Data} represents methods where a ranker is trained directly on human-annotated relevance labels without any bias, whose performance is an upper bound.

There are certain learning-to-rank algorithms that can be combined with the above debiasing methods:
\textbf{DNN} is a deep neural network ranker as described in \citep{ai2018unbiasedb}.
\textbf{LambdaMART} is a widely used learning-to-rank algorithm as
described in \citep{burges2010ranknet}.

\begin{figure}[t]
	\centering
	\includegraphics[width=1.0\linewidth]{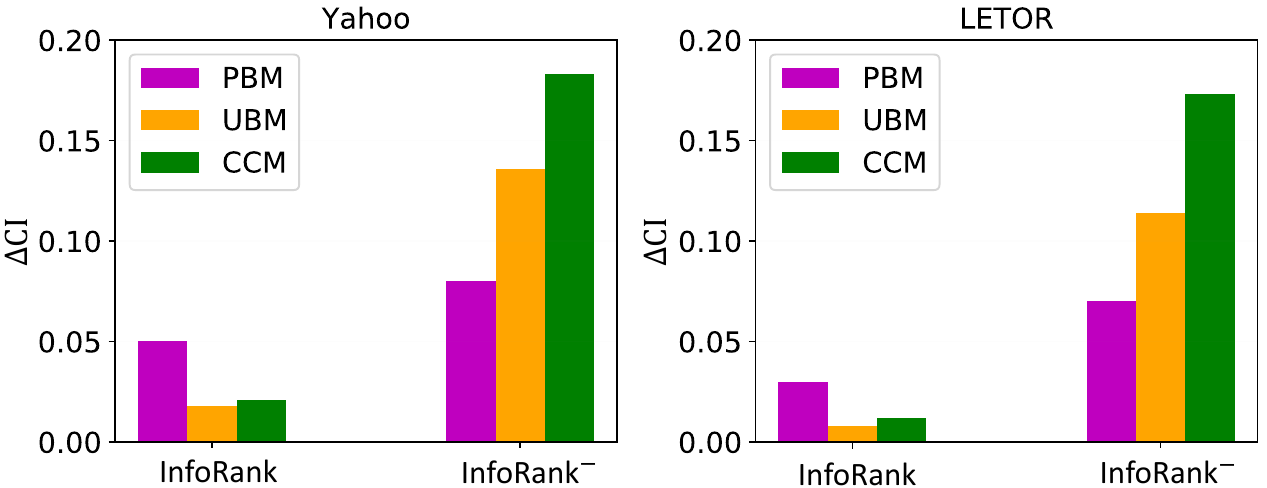}
	\vspace{-6mm}
	\caption {Comparison of \textsf{InfoRank} and \textsf{InfoRank}$^-$ under different click generation models and datasets in terms of the $\Delta$CI metric.}
	\label{fig:fairness}
	\vspace{-5mm}
\end{figure}

Let \textbf{\textsf{InfoRank}} denote ours, where \textsf{InfoRank} (Debiasing) denotes the proposed debiasing method based on the conditional mutual information minimization and \textsf{InfoRank} (Ranking) denotes the proposed attention-based learning-to-rank method.
We introduce \textsf{InfoRank}$^-$ to denote the variant of \textsf{InfoRank} without the conditional mutual information regularization.

Following the experimental settings in \citep{hu2019unbiased}, we do not incorporate Ratio-Debiasing with DNN and Dual Learning with LambdaMART, since DNN is originally designed for Dual Learning and Ratio-Debiasing is typically designed based on LambdaMART.
We also evaluate the performance of incorporating \textsf{InfoRank} (Debiasing) into DNN by using DNN as our estimations instead of the proposed attention network.

\minisection{Evaluation Metrics}
Evaluation metrics chosen for assessing performance are NDCG at positions 3, 5, 10 (denoted as N@3, N@5, N@10) and MAP at position 10 (denoted as MAP) as evaluation metrics.
Additionally, the paper introduces another evaluation metric, denoted as $\Delta\text{CI}$ in Eq.~(\ref{eqn:metric}) to measure the casual dependence defined in Section~\ref{subsec:fair}.

\begin{figure}[t]
	\centering
	\includegraphics[width=1.0\linewidth]{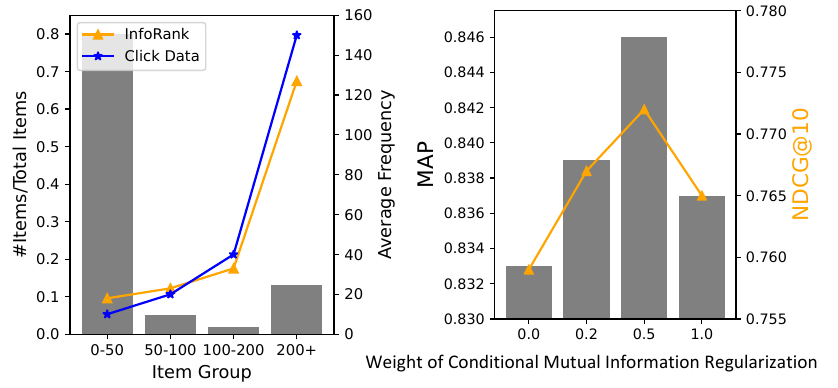}
	\vspace{-6mm}
	\caption {Left: Average frequency of different item groups recommended by \textsf{InfoRank} (Ranking) incorporated with \textsf{InfoRank} (Debiasing) and Click Data on Adressa. Right: Performance change of \textsf{InfoRank} with different regularization weight $\eta$ on Yahoo.}
	\label{fig:pop}
	\vspace{-5mm}
\end{figure}

\subsection{Performance Comparison}
\label{subsec:res}
We summarize the results regarding the ranking performance in Table~\ref{tab:res}.
The results of Ratio-Debiasing, Regression-EM and Dual Learning are 
comparable with those reported in \citep{hu2019unbiased}.
We show the major findings as follows:
\begin{itemize}[topsep = 3pt,leftmargin =5pt]
	\item  
	Our model that combines \textsf{InfoRank} (Ranking) and \textsf{InfoRank} (Debiasing) together achieves better performance than any combinations of the existing state-of-the-art ranker and debiasing methods (e.g., the combination of LambdaMART with Ratio-Debiasing, the combination of DNN with Dual Learning) in terms of all measures, which verifies the superiority of our framework.
	\item We find that although sophisticated ranking algorithms like \textsf{InfoRank} (Ranking), LambdaMART and DNN can achieve good performance on click data, they are sensitive to position bias when comparing against the performance of debiasing methods, which indicates the importance of unbiased learning-to-rank.
	\item When using the same ranker, \textsf{InfoRank} (Debiasing) consistently outperforms other debiasing methods; and when under the same debiasing method (e.g., Regression-EM, Randomization), \textsf{InfoRank} (Ranking) could achieve the best performance.
\end{itemize}

\subsection{Debiasing Analysis}
\minisection{Visualization Analysis of Popularity Bias}
For popularity bias, since there is no unique ID for each item, we first adopt the K-means algorithm where item similarity is measured by Pearson's coefficient.
We use the query number as the number of classes, and regard items in the same class as the same item. 
We then study the correlations between the item frequency and the position of re-ranking lists.
As depicted in
Figure~\ref{fig:position}(Left),
the curve of click data (in blue with mark ``$\small \times$'') is far from that of relevance labels (in purple with mark `$\small \bullet$'), indicating that directly using click data without debiasing can be problematic. 
The curve of \textsf{InfoRank} (in red  with {\small ``$\small \blacksquare$''}) is the closest to the relevance label curve, indicating that the performance enhancement of \textsf{InfoRank} is indeed, at least partly, due to effective debiasing.

\minisection{Visualization Analysis of Position Bias}
For position bias, we compare the ranking list given by the debiased ranker against the initial one.
Specifically, we first identify the items at each position given by the initial ranker. 
Then we calculate the average positions of items at each original position after re-ranking by various debiasing methods, combined with \textsf{InfoRank}. 
We also calculate the their average positions after re-ranking by their relevance labels, which is regarded as the ground-truth. 
Ideally the average positions produced by the debiasing methods should be close to the average position by relevance labels. 
Similar to the visualization analysis of popularity bias, Figure~\ref{fig:position}(Right) shows that the curve of \textsf{InfoRank} (in red  with {\small ``$\small \blacksquare$''}) is the closest to the relevance label curve, revealing that \textsf{InfoRank} could truly decline the effect of position bias.

\minisection{Ablation Study on $\Delta$CI Metric}
    We further conduct evaluations under different simulators and datasets in terms of the $\Delta$CI metric.
    From results shown in Figure~\ref{fig:fairness}, we can see that \textsf{InfoRank} with the conditional mutual information regularization works better than \textsf{InfoRank} without it in terms of $\Delta$CI metric.
    We also notice that the regularization term shows weaker impact on PBM than UBM and CCM.
    One explanation is that PBM directly assumes the observation is solely determined by the position instead of user-item feature, namely the observation is naturally independent of the relevance. 
    By contrast, given the formulation of UBM and CCM, the observation is dependent on the click, and thus further is influenced by user-item feature.

    \minisection{Ablation Study of Item Groups with Different Frequency}
    We follow \citep{wei2021model} to further investigate whether \textsf{InfoRank} alleviates the popularity bias issue.
    Table~\ref{tab:res} shows \textsf{InfoRank} (Debiasing) significantly outperforms Click Data, and Figure~\ref{fig:pop}(Left) shows the recommendation frequency of popular items is reduced.
    It means that the models trained by Click Data are prone to recommend more popular items to unrelated users due to popularity bias. 
    In contrast, \textsf{InfoRank} (Debiasing) reduces the item’s direct effect and recommends popular items mainly to suitable users.

\minisection{Hyperparameter Study of $\eta$}
    To evaluate how the weight of the regularization $\eta$ influences the performance of \textsf{InfoRank}, we set $\eta=0.0, 0.2,0.5,1.0$ and test the performance on Yahoo. 
    Results depicted in Figure~\ref{fig:pop}(Right) imply that the proposed mutual information minimization can enhance the ranking performance.

\subsection{Robustness Analysis}

\minisection{Generalizability to Different Click Generations}
To test the generalizability of \textsf{InfoRank}, we use three different click generation models to generate data and conduct comparison experiments against the baselines on these data. 
Each click model simulates a certain user browsing pattern.
\textsf{InfoRank} has strong performance on data generated with UBM (as shown in Table~\ref{tab:res}), and also  outperforms baselines on the data generated from PBM and CCM (as shown in Table~\ref{tab:gen}), which shows that the idea of \textsf{InfoRank} is general to various user patterns.

\minisection{Different Degrees of Position Bias}
In the above experiments, we only test the performance of \textsf{InfoRank} and \textsf{InfoRank}$^-$ with click data generated from click models with a given degree of position bias, i.e.,~$\upgamma_1 = 0.5$ in CCM and $\tau=1$ in PBM. 
Here, $\upgamma_1$ and $\tau$ influence the probability of a user
examining the next result. 
Smaller $\upgamma_1$ and larger $\tau$ indicate a smaller probability to continue reading, which means a more severe position bias. 
Therefore, we set the hyper-parameters for each click generation model to five values and examine whether \textsf{InfoRank} is still equally effective.
The left and right subfigures of Figure~\ref{fig:robust} show the NDCG@10 results as the degree of position bias increases; the results in terms of other measures follow similar trends. 
When $\tau$ in PBM is 0, there is no position bias; while $\upgamma_1$ in CCM is 1, there still exists position bias from $\upgamma_2$ and $\upgamma_3$. 
As we add more position bias, i.e.,~as $\tau$  increases and $\upgamma_1$ decreases, the performance of all the debiasing methods decreases dramatically. 
One can see that under all these settings, \textsf{InfoRank} is less affected by position bias and consistently maintains the best results. 

\begin{figure}[t]
	\centering
	\includegraphics[width=1.0\linewidth]{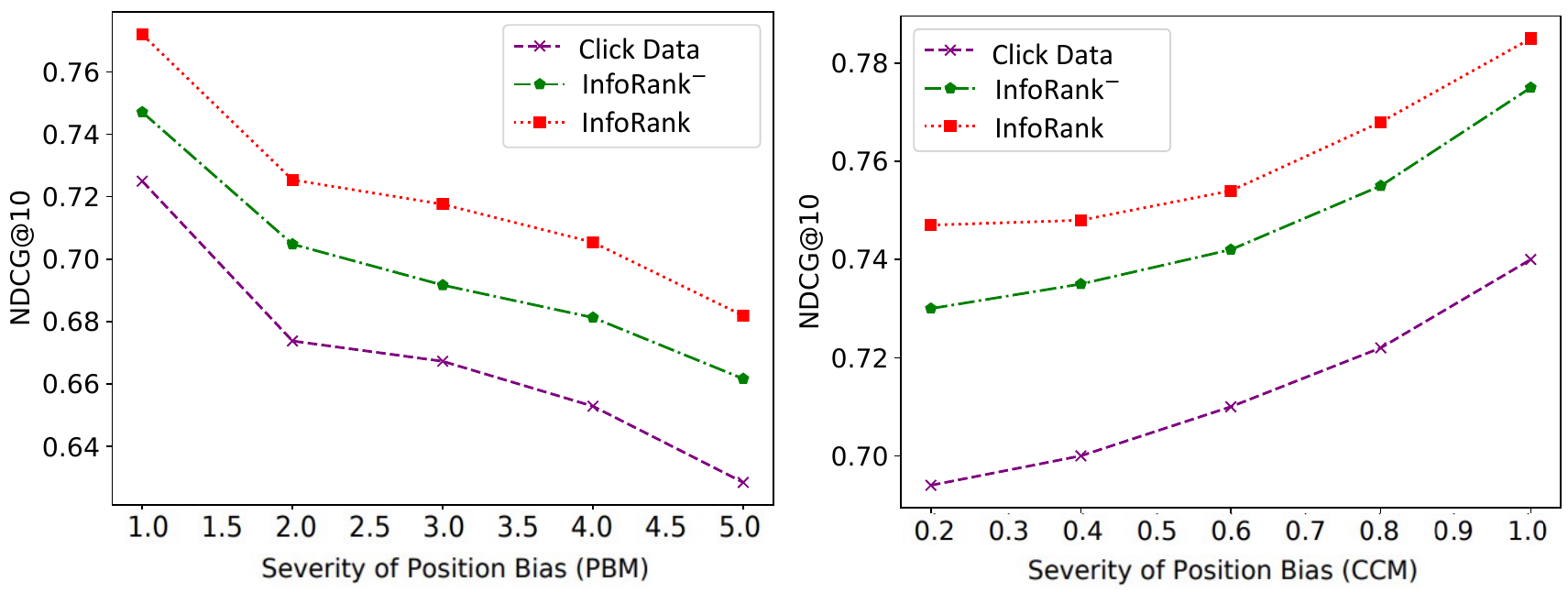}
	\vspace{-7mm}
	\caption {Performance of \textsf{InfoRank} against click data with different degrees of position bias on Yahoo.}
	\label{fig:robust}
	\vspace{-4mm}
\end{figure}

We include a robustness analysis for varying amounts of training data in Appendix~\ref{app:complexity} and we also assess the feasibility of deployment in Appendix~\ref{app:fea}.  

\section{Conclusion and Future Work}
In this paper, we propose a novel unbiased learning-to-rank framework named \textsf{InfoRank}.
Our fundamental insight is to consolidate various biases into a unified observation factor, allowing us to learn a precise ranker while effectively mitigating position and popularity biases. 
For future endeavors, it would be intriguing to extend the applicability of InfoRank to address additional ranking biases.

\clearpage
\bibliographystyle{ACM-Reference-Format}
\bibliography{main}
\clearpage
\appendix
{\LARGE{\textbf{APPENDIX}}}

\section{Proofs and Deviations}
\subsection{Detailed Deviation of Eq.~(\ref{eqn:pop1})}
\label{app:deviation}
We begin by extending Eq.~(\ref{eqn:clickmodel}) to:
\begin{equation}
\label{eqn:app1}
P(\mathcal{C}=\{c=1\}_d|X)=P(\mathcal{R}=\{r=1\}_d|X)\cdot P(\mathcal{O}=\{o=1\}_d|X),
\end{equation}
which means that given the feature of an item $d$, then its previous clicks (i.e., $\{c=1\}_d$) only occur when $d$ is both relevant $\{r=1\}_d$ and observed (i.e., $\{o=1\}_d$) by users.

\begin{equation}
\begin{aligned}
P(R|\mathcal{C},X) &\overset{(1)}{=} \frac{P(\mathcal{C}|R,X)P(R|X)}{P(\mathcal{C}|X)}\\
&\overset{(2)}{=}\frac{P(\mathcal{O}|R,X)}{P(\mathcal{O}|X)}\frac{P(\mathcal{R}|R,X)P(R|X)}{P(\mathcal{R}|X)}\\ 
&\overset{(3)}{=}\frac{P(\mathcal{O}|R,X)P(R|X)}{P(\mathcal{O}|X)P(R|X)}P(R|\mathcal{R},X)\\
&\overset{(4)}{=}\frac{P(R|\mathcal{O},X)}{P(R|X)} P(R|\mathcal{R},X),
\end{aligned}
\end{equation}
where for simplicity, we use $\mathcal{C}$, $\mathcal{O}$, and $\mathcal{R}$ to denote $\mathcal{C}=\{c=1\}_d$, $\mathcal{O}=\{o=1\}_d$ and $\mathcal{R}=\{r=1\}_d$. 
Easy to see there are four steps in the above equation, where steps 1,3,4 use Bayes' theorem and step 2 uses Eq.~(\ref{eqn:app1}).
$\mathcal{C},\mathcal{O},\mathcal{R}$ are random variables whose meanings are extended but different from $C,O,R$.

The instance of $C,O,R$ is the corresponding value $c,o,r$ (from user $u$) for each data point $d$, while the instance of $\mathcal{C},\mathcal{O},\mathcal{R}$ is a set of previous $c,o,r$ associated with $d$, denoted as $\{c\}_d,\{o\}_d,\{r\}_d$.
Since variables $\mathcal{C},\mathcal{O},\mathcal{R}$ are expressed in different representation spaces from $C,O,R$, it is not practical to directly enforce the independence of $\mathcal{O}$ and $R$ conditioned on $X$.
However, it is observed that $\mathcal{O}$ and $O$ exhibit a strong correlation since they both represent user observations.
As a result, reinforcing the independence of $O$ and $R$ conditioned on $X$ can approximate the ratio $P(R|\mathcal{O},X)/P(R|X)$ to be close to 1.
The variables $\mathcal{R}$ and $R$ exist in different representation spaces but represent the same factor, i.e., relevance.
Therefore, $P(R|\mathcal{R},X)$ indicates the inductive ability of the ranker, which corresponds to the process of learning from historical relevance values $\mathcal{R}$ to predict the current relevance value $R$.

\subsection{Proof of Proposition~\ref{the:fair}}
\label{app:proof1}
\begin{proposition}
Given that relevance, click, and observation variables are binary (i.e., $R, C, O\in\{1,0\}$), for any user-item pair with feature $X$,
the following statements are equivalent:
\begin{itemize}[topsep = 3pt,leftmargin =5pt]
\item[1] The relevance $R$ and observation $O$ are conditionally independent 
given $X$.
In other words, $P(R,O|X)=P(R|X)\cdot P(O|X)$.
That is, $P(R|O=1,X)-P(R|O=0,X)=0$.
\item[2] The conditional mutual information between relevance $R$ and observation $O$ (later defined in Eq.~(\ref{eqn:mutual})) is zero,
i.e.,~$\mathcal{I}(R;O|X)=0$.
\item[3] The conditional independence score $\Delta \text{CI}$ is zero. 
\end{itemize}
\end{proposition}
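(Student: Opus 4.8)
The plan is to prove the three statements equivalent by establishing a cycle of implications, $(1) \Rightarrow (2) \Rightarrow (3) \Rightarrow (1)$, though since each equivalence is essentially a restatement under the binary assumption, it may be cleaner to prove $(1) \Leftrightarrow (2)$ and $(1) \Leftrightarrow (3)$ separately. The key observation driving everything is that conditional mutual information $\mathcal{I}(R;O|X)$ is a nonnegative quantity (being an expectation of KL divergences) that vanishes if and only if the joint conditional distribution factorizes, which is precisely the definition of conditional independence in statement (1).

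First I would handle $(1) \Leftrightarrow (2)$. Starting from the definition in Eq.~(\ref{eqn:information}), I would note that $\mathcal{I}(R;O|X=\bm{x})$ is a sum of terms of the form $P(R,O|X=\bm{x}) \ln \frac{P(R,O|X=\bm{x})}{P(R|X=\bm{x}) P(O|X=\bm{x})}$, which is the KL divergence between the joint distribution $P(R,O|X=\bm{x})$ and the product of marginals. By Gibbs' inequality (or the nonnegativity of KL divergence), this is always $\geq 0$, with equality holding if and only if $P(R,O|X=\bm{x}) = P(R|X=\bm{x}) \cdot P(O|X=\bm{x})$ for all values of $R,O$. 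Taking the expectation over $\bm{x} \sim \mathcal{D}$ as in Eq.~(\ref{eqn:mutual}), the aggregate $\mathcal{I}(R;O|X)$ is zero iff the factorization holds for (almost) every $\bm{x}$, which is exactly statement (1).

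Next I would handle $(1) \Leftrightarrow (3)$. Here I would use the binary assumption crucially. Statement (3) asserts $\Delta\text{CI} = |P(R=1|O=1,X=\bm{x}) - P(R=1|O=0,X=\bm{x})| = 0$ from Eq.~(\ref{eqn:metric}). The forward direction is immediate: if $R$ and $O$ are conditionally independent given $X$, then conditioning on the value of $O$ does not change the distribution of $R$, so both conditional probabilities equal $P(R=1|X=\bm{x})$ and their difference is zero. For the converse, I would show that $P(R=1|O=1,X) = P(R=1|O=0,X)$ forces the full factorization. Writing $p = P(R=1|O=1,X) = P(R=1|O=0,X)$ and using the law of total probability, $P(R=1|X) = p\,P(O=1|X) + p\,P(O=0|X) = p$, so $P(R=1|O=o,X) = P(R=1|X)$ for both $o$; since $R$ and $O$ are binary, equality on the $R=1$ slice propagates to the $R=0$ slice, giving full conditional independence.

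The main obstacle, if any, is less a deep difficulty than a matter of careful bookkeeping: I must ensure the reduction from the scalar condition in statement (3) to the full joint factorization in statement (1) genuinely uses binarity, since for non-binary $R$ or $O$ a single scalar difference would be insufficient to guarantee independence. I would therefore be explicit that the equivalence $P(R=1|O=1,X) = P(R=1|O=0,X) \Leftrightarrow P(R,O|X) = P(R|X)P(O|X)$ rests on both variables taking only two values, so that matching one conditional probability pins down the entire $2 \times 2$ conditional joint table. A minor side point worth stating is the harmless assumption that $P(O=o|X=\bm{x}) > 0$ so the conditional probabilities in statements (1) and (3) are well-defined; this aligns with the boundedness-away-from-zero hypothesis used later in Proposition~\ref{pro:unbias}.
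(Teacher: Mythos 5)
Your proof is correct, but it is organized differently from the paper's, and the difference is worth noting. The paper dispatches the equivalence of statements 1 and 3 as immediate from the definitions (Eq.~(\ref{eqn:fair}) and Eq.~(\ref{eqn:metric})), since statement 1 as written already bundles in the scalar condition $P(R|O{=}1,X)-P(R|O{=}0,X)=0$; it then spends its effort on $1\Leftrightarrow 2$, where it \emph{asserts} that $\mathcal{I}(R;O|X)=0$ implies $P(R|O{=}1,X)=P(R|O{=}0,X)$ "according to Eq.~(\ref{eqn:information})" and completes the argument with the law-of-total-probability computation $P(R|X)=\sum_O P(R|O,X)P(O|X)=P(R|O{=}o,X)$. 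You invert this allocation of effort: you prove $1\Leftrightarrow 2$ by recognizing $\mathcal{I}(R;O|X=\bm{x})$ as a KL divergence between the conditional joint and the product of conditional marginals and invoking Gibbs' inequality, and you use the law-of-total-probability argument instead for the converse direction of $1\Leftrightarrow 3$. Your route is the more rigorous one at precisely the point where the paper is thinnest: the individual summands of Eq.~(\ref{eqn:information}) can be negative, so one cannot conclude from the sum vanishing that each log-ratio vanishes without the nonnegativity (Gibbs) argument you supply — the paper's assertion silently relies on exactly this fact. Conversely, your extra work on $3\Rightarrow 1$ (showing the single scalar equality pins down the whole $2\times 2$ conditional table, including the $R=0$ slice) makes explicit where binarity is genuinely used, which the paper treats as definitional. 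Both proofs also implicitly need $P(O{=}o|X{=}\bm{x})>0$ for the conditionals to be well-defined; you state this, the paper does not.
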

\begin{proof}
    By combining Eqs.~(\ref{eqn:fair}) and (\ref{eqn:metric}), it is trivial to conclude that \emph{statement}s 1 and 3 are equivalent.

	Here, we further prove that \emph{statement}s 1 and 2 are equivalent.
	According to Eq.~(\ref{eqn:information}), \emph{statement} 2 implies $P(R|O=1,X)=P(R|O=0,X)$.
	Given that $O \in \{0,1\}$, we can derive that 
	$P(R|O,X)=P(R|O=o,X)$
    which holds for all $o\in\{0,1\}$.
    Hence, we have:
    \begin{equation}
	\begin{aligned}
	P(R|X) =& \sum_{O} P(R|O,X)\cdot P(O|X)\\
	=&P(R|O=o,X)\cdot \sum_O P(O|X)\\
 =&P(R|O,X),
	\end{aligned}
	\end{equation} 
	from which we can further derive: $P(R,O|X)=P(R|O,X)\cdot P(O|X) = P(R|X)\cdot P(O|X)$.
	From the analysis above, we can derive \emph{statement} 1 from \emph{statement} 2. 
	Also, we can obtain \emph{statement} 2 from \emph{statement} 1 in a similar way.
\end{proof}

\subsection{Proof of Proposition~\ref{pro:unbias}}
\label{app:proof2}
\begin{proposition}
Assuming that a ranker satisfies Eq.~(\ref{eqn:fair}), namely $P(R|O=1,X=\bm{x})-P(R|O=0,X=\bm{x})=0$ holds and $P(O|X=\bm{x})$ is bounded away from zero for any data point in $\mathcal{D}$, then the ranker is unbiased, namely optimizing
\begin{equation}
\label{eqn:prorisk}
\widehat{\mathcal{F}}(f) = \sum_u \sum_{d \in \mathcal{D}_u} \frac{\Delta(f(\bm{x}),c)}{P(O|X=\bm{x})}
\end{equation}
is equivalent to optimizing $\widetilde{\mathcal{F}}(f)$.
\end{proposition}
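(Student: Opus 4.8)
The plan is to show that $\widehat{\mathcal{F}}(f)$ is an unbiased estimator of the ideal relevance risk $\widetilde{\mathcal{F}}(f)$ with respect to the randomness of the click-generation process; since the two risks would then share the same expected value for every candidate ranker $f$, minimizing one is equivalent to minimizing the other. Recall that $\widetilde{\mathcal{F}}(f) = \sum_u \sum_{d\in\mathcal{D}_u} \Delta(f(\bm{x}),r)$ is obtained from $\mathcal{F}(f)$ by substituting the latent relevance label $r$ for the click $c$, and that under the click model of Eq.~(\ref{eqn:clickmodel}) a click arises exactly when an item is both relevant and observed, i.e.\ at the level of realized binary values $c = o\cdot r$.

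First I would fix a single data point $(u,d)$ with feature $\bm{x}$ and condition on its latent relevance $R=r$. Writing the click as $C = \mathbf{1}[O=1]\cdot r$, I would take the expectation of the per-point summand $\Delta(f(\bm{x}),C)/P(O=1|X=\bm{x})$ over the observation variable $O$. The decisive step is to argue that $P(O=1\mid R=r,X=\bm{x}) = P(O=1\mid X=\bm{x})$: this is exactly the conditional independence of $O$ and $R$ given $X$ guaranteed by the hypothesis Eq.~(\ref{eqn:fair}) (equivalently $R\perp O\mid X$, via \textsc{Proposition~\ref{the:fair}}), so the observation propensity does not depend on relevance and can be factored out of the expectation.

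Next I would expand the conditional expectation into its two cases $o\in\{0,1\}$. When $o=1$ the click coincides with the relevance, so the summand contributes $\Delta(f(\bm{x}),r)$ weighted by $P(O=1|X=\bm{x})$; when $o=0$ the click is forced to $0$ and contributes $\Delta(f(\bm{x}),0)$. Under the standard point-wise convention that an uninformative (non-clicked, non-observed) item incurs no loss, $\Delta(f(\bm{x}),0)=0$, the inverse-propensity weight $1/P(O=1|X=\bm{x})$ cancels the factor $P(O=1|X=\bm{x})$, and the conditional expectation collapses to $\Delta(f(\bm{x}),r)$. The assumption that $P(O|X=\bm{x})$ is bounded away from zero is used precisely here, to guarantee that this weight is finite and the estimator is well-defined at every data point.

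Summing over all users and items and invoking linearity of expectation would then yield $\mathbb{E}[\widehat{\mathcal{F}}(f)] = \sum_u\sum_{d\in\mathcal{D}_u}\Delta(f(\bm{x}),r) = \widetilde{\mathcal{F}}(f)$ for every $f$, establishing the claimed equivalence of the two optimization objectives. I expect the main obstacle to be the rigorous justification of the factorization $P(O=1\mid R=r,X)=P(O=1\mid X)$ — that is, confirming that Eq.~(\ref{eqn:fair}) supplies exactly the symmetric conditional independence needed to decouple the observation propensity from the relevance label — together with making the loss convention on non-clicked items explicit, since the very cancellation that drives the unbiasedness result hinges upon it.
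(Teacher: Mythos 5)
Your proposal is correct and follows essentially the same route as the paper's proof: both are the standard inverse-propensity cancellation argument, resting on the click factorization $c=o\cdot r$ together with the convention $\Delta(f(\bm{x}),0)=0$ (which you state explicitly and the paper leaves implicit in its substitution $\Delta(f(\bm{x}),c)=o\cdot\Delta(f(\bm{x}),r')$), the boundedness of $P(O\mid X=\bm{x})$ away from zero, and the conditional independence $R\perp O\mid X$ supplied by Eq.~(\ref{eqn:fair}). The only organizational difference is where that independence is invoked — you factor the propensity as $P(O=1\mid R,X)=P(O=1\mid X)$ before taking the expectation over $O$, whereas the paper takes the expectation first and then identifies $P(R\mid O=o,X)$ with $P(R\mid X)$ at the end; these are symmetric uses of the same hypothesis and yield the same conclusion.
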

\begin{proof}
	We can obtain the formulation of the risk function of unbiased ranker from Eq.~(\ref{eqn:ranker}) as
	\begin{equation}
	\widetilde{\mathcal{F}}(f) = \int_{u} \int_{d \in \mathcal{D}_u} \Delta(f(\bm{x}),r) \ d P(\bm{x},r).
	\end{equation}
	However, $\widetilde{\mathcal{F}}(f)$ cannot be computed directly, 
	and is
	typically estimated 
	via the following
	empirical risk function:
	\begin{equation}
	\label{eqn:prooftarget}
	\mathcal{F}'(f) = \frac{1}{|\mathcal{D}|}\sum_u\sum_{d \in \mathcal{D}_u} \Delta(f(\bm{x}),r),
	\end{equation}
	where $\Delta$ denotes a point-wise loss function.
	According to Eq.~(\ref{eqn:click}), we can derive that
	\begin{equation}
	\label{eqn:unbias}
	\begin{aligned}
	\mathbb{E}\big(\widehat{\mathcal{F}}(f))
	=&\mathbb{E}(\sum\frac{\Delta(f(\bm{x}),c)}{P(O|X=\bm{x})}\big)\\
	=&\mathbb{E}\big(\sum\frac{o\cdot\Delta(f(\bm{x}),r')}{P(O|X=\bm{x})}\big)\\
	=&\sum\big( \mathbb{E}(o)\frac{\Delta(f(\bm{x}),r')}{P(O|X=\bm{x})}\big)\\
	=&\sum \big( P(O|X=\bm{x}) \frac{\Delta(f(\bm{x}),r')}{P(O|X=\bm{x})})\big)\\
	=&\sum \Delta(f(\bm{x}),r') = \mathcal{F}''(f),
	\end{aligned}
	\end{equation}
	where we omit $\frac{1}{|\mathcal{D}|}$ since 
	it is
	a constant and 
	use
	$\sum$ to denote $\sum_u\sum_{d\in\mathcal{D}_u}$ for convenience.
	\vspace{2mm}
    
    Here, $r$ in Eq.~(\ref{eqn:prooftarget}) is estimated with $P(R|X=\bm{x})$, while $r'$ in Eq.~(\ref{eqn:unbias}) is estimated with $P(R|O=o,X=\bm{x})$.
    Since the ranker satisfies Eq.~(\ref{eqn:fair}), according to Eq.~(\ref{eqn:relevance}) we have:
    \begin{equation}
    P(R|X=\bm{x})=P(R|O=o,X=\bm{x}).
    \end{equation}
    So optimizing $\mathcal{F}''(f)$ is equivalent to optimizing $\mathcal{F}'(f)$, 
    implying \textsf{InfoRank} is an unbiased estimator.
	
	Note that we here share some equations with \citep{joachims2017unbiased}.
\end{proof}

\section{Experimental Configuration}
\subsection{Dataset Description}
\label{app:dataset}
We evaluate our methods against the strong baselines over three widely adopted learning-to-rank datasets.
\begin{itemize}[topsep = 3pt,leftmargin =5pt]
	\item \textbf{Yahoo search engine dataset} is one of the largest benchmark datasets widely used in unbiased learning-to-rank \citep{hu2019unbiased,ai2018unbiaseda,jin2020deep}.
	It consists of 29,921 queries and 710k documents. 
	Each query-document pair is represented by a 700-dimensional feature vector manually assigned with a label denoting relevance at 5 levels \citep{chapelle2011yahoo}.
	
	\item \textbf{LETOR webpage ranking dataset} is a package of benchmark datasets for research on LEarning TO Rank, which uses the Gov2 web page collection and two query sets from the Million Query track of TREC2007 and TREC2008.
	We conduct experiments on MQ2007, one of two datsets for supervised learning-to-rank tasks in LETOR. MQ2007 contains 2,476 queries and 85K documents.
	Each query-document pair is represented by a 46-dimensional feature vector with a manually assigned 3-level label \citep{qin2010letor}.
	
	\item \textbf{Adressa recommender system dataset} is a news dataset that includes news articles in connection with anonymized users, where we can regard 
	user and item information as query and document features respectively.
	Similar approaches can be found in  \citep{wang2006user,jin2020deep}.
	It contains 2,287K historical logs of 561,733 users and 11,207 articles.
	Each query-document pair is represented with 
	several
	contextual attributes, such as
	title and category \citep{gulla2017adressa}.
	Following \citep{chakraborty2019equality}, we first calculate the normalized reading time ${n \upsilon}_{ u,d}$ with respect to both users' reading habits and the length of different articles:
    \begin{equation}
    n \upsilon_{u,d} = \frac{\upsilon_{u,d}}{\mu_d}, 
    \end{equation}
    where $\upsilon_{u,d}$ is the time user $u$ spent reading article $d$, and $\mu_d$ is the average reading time 
    for
    article $d$. Then we convert  all $n \upsilon _{u,d}$ to binary relevance by mapping the top 20\% to 1 and the rest 
    to 0.
\end{itemize}
Let $y$ represent the relevance level in each dataset, then $y \in \{0,1\}$ for Adressa Recommender system dataset, $y \in \{0,1,2\}$ for LETOR webpage ranking dataset, and $y \in \{0,1,2,3,4\}$ for Yahoo search engine dataset.
Following \citep{ai2018unbiasedb,hu2019unbiased}, we transform
the relevance into a binary space by assigning a threshold on the probability of relevance $P(r=1)$ which is calculated by
\begin{equation}
\label{eqn:multi}
    P(r=1) = \epsilon + (1-\epsilon) \cdot \frac{2^{y} - 1}{2^{y_{\text{max}}} - 1}, 
\end{equation}
where $\epsilon$ denotes click noise and is set to 0.1 by default.

For all datasets, we remove sequences whose relevance is all 0. 
We also remove extremely long sequences, namely those
that contain more than 50 events.

\subsection{Click Data Generation}
\label{app:click}
In order to generate the click data from the relevance data, we employ the following three click generators, each of which is designed for a certain user browsing patterns. 
\begin{itemize}[topsep = 3pt,leftmargin =5pt]
	\item \textbf{PBM} \citep{richardson2007predicting} simulates 
	user browsing behavior based on the assumption that the bias of an item only depends on its position, which can be formulated as $P(o_i) = \rho_i^\tau$, where $\rho_i$ represents 
	position bias at position $i$ and $\tau \in [0, +\infty]$ is a parameter controlling the degree of position bias.
	The position bias $\rho_i$ is obtained from an eye-tracking experiment in \citep{joachims2005accurately} and the parameter $\tau$ is set as 1 by default. 
	It also assumes that a user decides to click a document $d_i$ according to the probability $P(c_i) = P(o_i) \cdot P(r_i)$.
	
	\item \textbf{UBM} \citep{dupret2008user} is an extension of the PBM model that has some elements of the cascade model.
	The examination probability depends not only on the rank of an item $d_i$ but also on the rank of the previously clicked document $d_{i'}$ as $P(o_i=1|c_{i'}=1,c_{i'+1}=0,\ldots,c_{i-1}=0)=\upgamma_0$.
	Similarly, we get 
	$\upgamma_0$ 
	from the eye-tracking experiments in \citep{joachims2005accurately,dupret2008user}. The click probability is determined by $P(c_i) = P(o_i) \cdot P(r_i)$. 
	
	\item \textbf{CCM} \citep{guo2009click} is a cascade model, which assumes that the user browses search results in a sequential order from top to bottom.
	User browsing behavior is 
	conditioned on both current and past documents, as
	$P(c_i = 1 | o_i = 0) = 0$,
	$P(c_i = 1 | o_i = 1, r_i) = P(r_i)$,
	$P(o_{i+1} = 1 | o_i = 0) = 0$,
	$P(o_{i+1} = 1 | o_i = 1, c_i = 0) = \upgamma_1$,
	$P(o_{i+1} = 1 | o_i = 1, c_i = 1, r_i) = \upgamma_2 \cdot (1-P(r_i)) + \upgamma_3 \cdot P(r_i)$.
	The parameters are obtained from an experiment in \citep{guo2009click} with $\upgamma_2=0.10$ and $\upgamma_3=0.04$ for navigational queries (Yahoo and LETOR datasets); $\upgamma_2=0.40$ and $\upgamma_3=0.27$ for informational queries (Adressa dataset).
	$\upgamma_1$ is set to 0.5 by default.
\end{itemize}

\subsection{Hyperparameter Setting}
\label{app:hyperparameter}
Our experiments were implemented with Tensorflow.
From the baseline methods, we directly use their defacult hyper-parameters.
For \textsf{InfoRank}, we set the learning rate as $0.001$, batch size as $128$, batch normalization decay as $0.9$, the weight of $\alpha$ as $0.01$, L2 regularization weight as $0.01$, and optimize with Adam. 
In the main experiment, we set $\eta$ as 0.5.
All the models are trained under the same hardware settings with 16-Core AMD Ryzen 9 5950X (2.194GHZ), 62.78GB RAM, NVIDIA GeForce RTX 3080 cards.

\begin{figure}[t]
	\centering
	\includegraphics[width=0.7\linewidth]{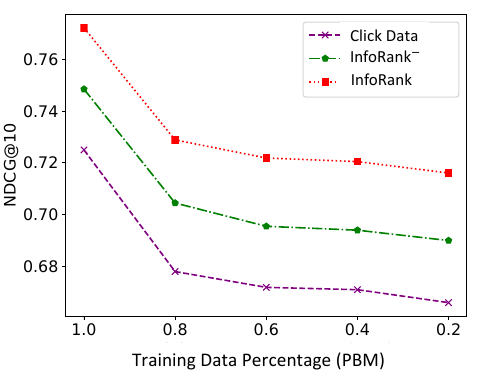}
	\vspace{-4mm}
	\caption {Performance of \textsf{InfoRank} and \textsf{InfoRank}$^-$ against click data with different amounts of training data on Yahoo.}
	\label{fig:data}
	\vspace{-4mm}
\end{figure}

\begin{figure*}[t]
	\centering
	\includegraphics[width=0.65\linewidth]{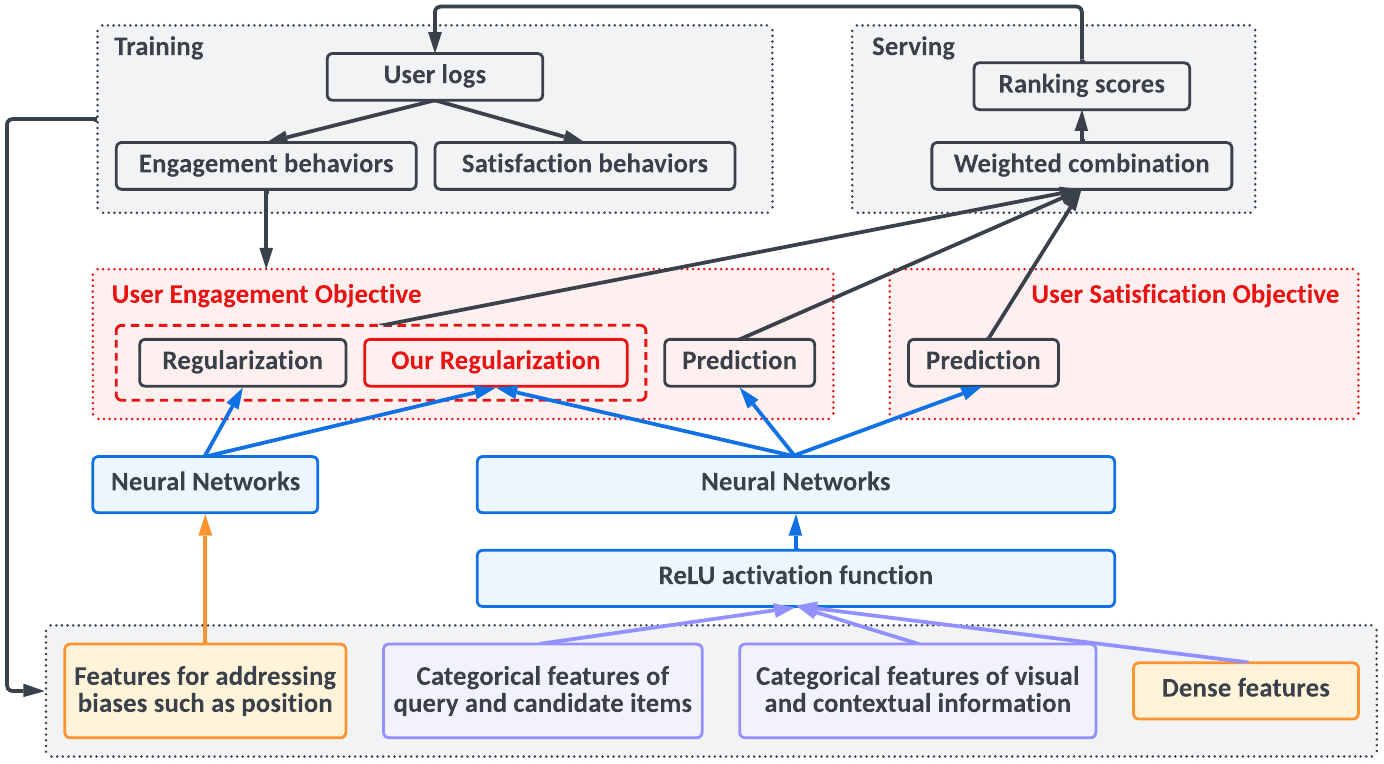}
	\vspace{-3mm}
	\caption[Caption for LOF]{
	Online unbiased learning-to-rank system with the proposed \textsf{InfoRank} model. 
	The new system plugins the proposed conditional mutual information regularization in the industrial pipeline described in \citep{zhao2019recommending}.
	}
	\label{fig:deploy}
	\vspace{-2mm}
\end{figure*}

\section{Additional Experimental Results}
\label{app:complexity}
\minisection{Different Amounts of Training Data}
We study the robustness of \textsf{InfoRank} with or without the conditional mutual information regularization, denoted as \textsf{InfoRank} or \textsf{InfoRank}$^-$, under different amounts of training data on Yahoo. 
We first randomly select a subset of training
data (i.e.,~20\% - 100\%) to generate click data, and then use these datasets to train \textsf{InfoRank} with different debiasing methods. 
For fair comparisons, we use the same data for
evaluation across all experiments.
As shown in Figure~\ref{fig:data}, when the amount of training data decreases, the improvements obtained by the debiasing methods also decrease.
The reason could be that the position bias estimated from insufficient training data is not accurate, which can hurt the performance of debiasing methods.

\section{Deployment Feasibility}
\label{app:fea}
We are actively pursuing an opportunity to deploy \textsf{InfoRank} in the operational schedule of daily item recommendation platform within a mainstream e-commerce company.
In this context, we would like to discuss the feasibility of \textsf{InfoRank}'s industrial deployment.

The transition from the current unbiased learning-to-rank model pipeline to \textsf{InfoRank} involves some key considerations.
The main change brought about by \textsf{InfoRank} is the computation and minimization of conditional mutual information.
To update the model pipeline to \textsf{InfoRank}, we would need to incorporate an estimator for observation in addition to the relevance estimator and perform the computation of conditional mutual information.
Fortunately, many existing unbiased search engines (e.g., \citep{zhao2019recommending}) have already implemented a shadow tower to model the influence of ranking biases, especially position bias.
In most cases, this shadow tower is responsible for generating observation estimations, as position significantly affects the observation of an item. Therefore, to integrate the proposed conditional mutual information minimization into the pipeline, we only need to adapt their regularization term to ours, as outlined in Eq.~(\ref{eqn:mutual}).
This modification mainly involves switching from the existing ``regularization'' term (shown in the black box in Figure~\ref{fig:deploy}) to our proposed ``conditional mutual information regularization'' term (shown in the red box in Figure~\ref{fig:deploy}).

Efficiency is another essential concern in industrial applications.
We have conducted an analysis of the time complexity of \textsf{InfoRank} in Section~\ref{sec:analysis}.
Importantly, adding the new regularization term has minimal impact on computational cost.
Consequently, the overall increase in computational load is mostly attributable to Eq.~(\ref{eqn:mutual}), resulting in only a slight additional cost in terms of complexity and almost no discernible effect on system performance.

Moreover, it is important to mention that \textsf{InfoRank}, following a two-tower architecture, can not be reduced to a single-tower architecture, even with accessible observation information.
This architectural choice is made to introduce inductive biases to enable more efficient use of limited observational logs and improve generalization.

\section{Additional Related Work}
\label{app:relate}
In Section~\ref{sec:relate}, we can summarize the difference between our method and IPW-based debiasing methods.
Recently, casual-aware methods have thrived in analyzing the root causes of bias problems \citep{christakopoulou2020deconfounding,luo2023unconfounded,gupta2021causer,sato2020unbiased,wang2021deconfounded,zhang2021causal} and applying backdoor adjustment \citep{pearl2000models} during training or inference to address the bias problems.
For example, \citet{zhang2021causal} identifies popularity as a confounder that affects both item exposures and user clicks.
However, the key objective of our paper is not to discover confounders between the recommended items and corresponding user feedback.
Instead, we directly estimate relevance and observation factors to formulate their relationship and enhance the independence between observation and relevance estimations.

\end{document}